\newcounter{lpnumber} \setcounter{lpnumber}{0}
\newcommand{\cupdot}{\mathbin{\mathaccent\cdot\cup}}
\newcommand{\vote}{\mathsf{vote}}
\newtheorem{new-claim}{Claim}
\title{Popular Edges and Dominant Matchings}
\author{\'{A}gnes Cseh\inst{1}\ \ \and \ \ Telikepalli Kavitha\inst{2}}
\institute{TU Berlin, Germany. \email{cseh@math.tu-berlin.de} \and Tata Institute of Fundamental Research, India. \email{kavitha@tcs.tifr.res.in}}
\begin{document}
\pagestyle{plain}
\maketitle

\begin{abstract}
Given a bipartite graph $G = (A \cup B,E)$ with strict preference lists and $e^* \in E$, we ask if there exists a popular matching in $G$
that contains the edge~$e^*$. We call this the {\em popular edge} problem. 
A matching $M$ is popular if there is no matching $M'$ such that the vertices that prefer $M'$ 
to $M$ outnumber those that prefer $M$ to~$M'$. 
It is known that every stable matching is popular; however $G$ may have no stable matching with the edge $e^*$ in it.
In this paper we identify another natural subclass of  popular matchings called ``dominant matchings'' and
show that if there is a popular matching that contains the edge $e^*$, then there is either
a stable matching that contains $e^*$ or a dominant matching that contains~$e^*$. 
This allows us to design a linear time algorithm for the popular edge problem. We also use dominant matchings to efficiently 
test if every popular matching in $G$ is stable or not.
\end{abstract}

\section{Introduction}
\label{intro}
We are given a bipartite graph $G = (A \cup B, E)$ where each vertex has a strict preference list ranking its neighbors and we
are also given $e^* \in E$. Our goal is to compute a matching that contains the edge $e^*$, in other words, $e^*$ is an essential
edge to be included in our matching -- however, our matching also has to be globally acceptable, in other words,
$M$ has to {\em popular} (defined below). 

We say a vertex $u \in A \cup B$ prefers matching $M$ to matching $M'$ if either $u$ is matched in $M$ and unmatched 
in $M'$ or $u$ is matched in both and in $u$'s preference list, $M(u)$, i.e., $u$'s partner in $M$, is ranked better than $M'(u)$, 
i.e., $u$'s partner in~$M'$.
For matchings $M$ and $M'$ in $G$, let $\phi(M,M')$ be the number of vertices that prefer $M$ to~$M'$.
If $\phi(M',M) > \phi(M,M')$ then we say $M'$ is {\em more popular than}~$M$.

\begin{definition}
A matching $M$ is {\em popular} if  there is no matching that is more popular than $M$; in other words,
$\phi(M,M') \ge \phi(M',M)$ for all matchings $M'$ in~$G$.
\end{definition}

Thus in an election between any pair of matchings, where each vertex casts a vote for the matching that it prefers,
a popular matching never loses. 
Popular matchings always exist in $G$ since every stable matching is popular~\cite{Gar75}. 
Recall that a matching $M$ is stable if it has no {\em blocking pair}, 
i.e., no pair $(a,b)$ such that both $a$ and $b$ prefer each other to their respective assignments in~$M$. 
It is known that every stable matching is a minimum size popular matching~\cite{HK13}, thus
the notion of popularity is a relaxation of stability.

Our problem is to determine if there exists a popular matching in $G$ that contains the essential edge~$e^*$. 
We call this the {\em popular edge} problem. 
It is easy to check if there exists a stable matching that contains~$e^*$. However as stability is stricter than popularity,
it may be the case that there is no stable matching that contains $e^*$ while there is a popular  matching that contains $e^*$.
Fig.~\ref{fig:1} has such an example.


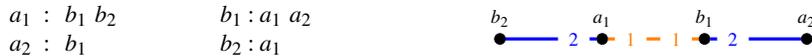
\begin{figure}[h]
	\vspace*{-4mm}
	\tikzstyle{vertex} = [circle, draw=black, fill=black, inner sep=0pt,  minimum size=5pt]
	\tikzstyle{edgelabel} = [circle, fill=white, inner sep=0pt,  minimum size=15pt]
	\centering
	\pgfmathsetmacro{\d}{1.7}
	\begin{minipage}{0.4\textwidth}
		\[
		\begin{array}{llllllll}
		a_1 \ : & & b_1 \ & b_2  & \hspace*{0.5in} b_1: a_1 \ & a_2\\
		a_2 \ : & & b_1 \ &      & \hspace*{0.5in} b_2: a_1 \ &  
		\end{array}
		\]
	\end{minipage}\hspace{10mm}\begin{minipage}{0.4\textwidth}
	\begin{tikzpicture}[scale=0.8, transform shape]
	\node[vertex, label=above:$a_1$] (a1) at (0,0) {};
	\node[vertex, label=above:$b_1$] (b1) at (\d,0) {};
	\node[vertex, label=above:$b_2$] (b3) at (-\d,0) {};
	\node[vertex, label=above:$a_2$] (a3) at ($(a1) + (2*\d, 0)$) {};
	
	\draw [very thick, orange] (a1) -- node[edgelabel, near start] {1} node[edgelabel, near end] {1} (b1);
	\draw [very thick, blue] (a1) -- node[edgelabel, near start] {2} (b3);
	\draw [very thick, blue] (a3) -- node[edgelabel, near end] {2} (b1);
	\end{tikzpicture}
\end{minipage}

\caption{The top-choice of both $a_1$ and of $a_2$ is $b_1$; the second choice of $a_1$ is~$b_2$. The preference lists of the $b_i$'s are symmetric. There is 
no edge between $a_2$ and~$b_2$. The matching $S = \{(a_1,b_1)\}$ is the only stable matching here, while there is another popular matching 
$M = \{(a_1,b_2), (a_2,b_1)\}$. Thus every edge is a popular edge here while there is only one stable edge~$(a_1,b_1)$.}
\label{fig:1}
\vspace*{-5mm}
\end{figure}

It is a theoretically interesting problem to identify those edges that can occur in a popular matching and those that cannot.
This solution is also applicable in a setting where $A$ is a set of applicants and $B$ is a set of posts, each applicant
seeks to be matched to an adjacent post and vice versa; also vertices have strict preferences over their neighbors.
The central authority desires to pair applicant $a$ and post $b$ to each other. However if the resulting matching $M$ should lose 
an election where vertices cast votes, then $M$ is unpopular and not globally stable; 
the central authority wants to avoid such a situation.
Thus what is sought here is a matching that satisfies both these conditions: (1)~$(a,b) \in M$ and (2)~$M$ is popular.

A first attempt to solve this problem may be to ask for a {\em stable} matching $S$ in the subgraph obtained by deleting the 
endpoints of $e^*$ from $G$ and add $e^*$ to~$S$. However $S \cup \{e^*\}$ need not be popular. 
Fig.~\ref{fig:arranged_pair} has a simple example where $e^* = (a_2,b_2)$ and the subgraph induced by $a_1,b_1,a_3,b_3$ 
has a unique stable matching~$\{(a_1,b_1)\}$. However $\{(a_1,b_1),(a_2,b_2)\}$ is not 
popular in $G$ as  $\{(a_1,b_3),(a_2,b_1)\}$ is more popular. 
Note that there {\em is} a popular matching $M^* = \{(a_1,b_3),(a_2,b_2),(a_3,b_1)\}$ that contains~$e^*$.

\begin{figure}[h]
\vspace*{-4mm}
\tikzstyle{vertex} = [circle, draw=black, fill=black, inner sep=0pt,  minimum size=5pt]
\tikzstyle{edgelabel} = [circle, fill=white, inner sep=0pt,  minimum size=15pt]
\centering
\pgfmathsetmacro{\d}{1.7}
\begin{minipage}{0.4\textwidth}
\[
\begin{array}{llllllll}
a_1 \ : & & b_1 \ & b_2 \ &  & \hspace*{0.5in}b_1: a_2 \ & a_1 \ & a_3\\
a_2 \ : & & b_1 \ & b_2 \ &       &   \hspace*{0.5in} b_2: a_2 \ &  \\
a_3 \ : & & b_1 \ &        &      &     \hspace*{0.5in} b_3: a_1
\end{array}
\]
\end{minipage}\hspace{10mm}\begin{minipage}{0.4\textwidth}
\begin{tikzpicture}[scale=0.8, transform shape]
\node[vertex, label=above:$a_1$] (a1) at (0,0) {};
\node[vertex, label=above:$b_1$] (b1) at ($(a1) + (\d, 0)$) {};
\node[vertex, label=left:$b_2$] (b2) at ($(a1) + (0, -\d)$) {};
\node[vertex, label=right:$a_2$] (a2) at ($(b1) + (0, -\d)$) {};
\node[vertex, label=above:$b_3$] (b3) at (-\d,0) {};
\node[vertex, label=above:$a_3$] (a3) at ($(a1) + (2*\d, 0)$) {};
\draw [very thick, blue] (a1) -- node[edgelabel, near start] {1} node[edgelabel, near end] {2} (b1);
\draw [very thick] (a1) -- node[edgelabel, near start] {2} (b3);
\draw [very thick] (a2) -- node[edgelabel, near start] {1} node[edgelabel, near end] {1} (b1);
\draw [very thick, blue] (a2) -- node[edgelabel, above] {$e^*$} node[edgelabel, near start] {2} (b2);
\draw [very thick] (a3) -- node[edgelabel, near end] {3} (b1);
\end{tikzpicture}
\end{minipage}
\caption{Here we have $e^* = (a_2,b_2)$. The top choice for $a_1$ and $a_2$ is $b_1$ while $b_2$ is their second choice; $b_1$'s top choice is $a_2$, second choice is $a_1$, and third choice is~$a_3$. The vertices $b_2,b_3$, and $a_3$ have $a_2,a_1$, and $b_1$ as their only neighbors.}
\label{fig:arranged_pair}
\vspace*{-5mm}
\end{figure}
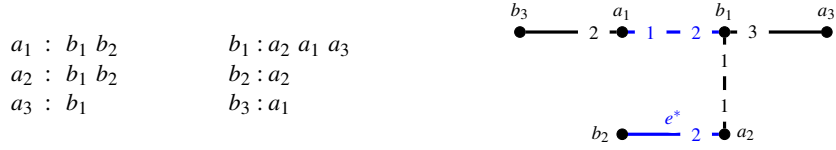

It would indeed be surprising if it was the rule that for every edge $e^*$, there is always a popular matching  
which can be decomposed as $\{e^*\}$ $\cup$ a stable matching on $E \setminus \{e^*\}$, as popularity is a far more flexible
notion than stability; for instance, the set of vertices matched in every stable matching in $G$ is the same~\cite{GS85} while
there can be a large variation (up to a factor of 2) in the sizes of popular matchings in $G$.
We need a larger palette than the set of stable 
matchings to solve the popular edge problem. We now identify another natural subclass of popular matchings called  {\em dominant} 
popular matchings or dominant matchings, in short. In order to define dominant matchings, we use the relation ``defeats'', defined 
as follows.

\begin{definition}
\label{def:defeat}
Matching $M$ {\em defeats} matching $M'$ if either of these two conditions holds:
\begin{enumerate}[(i)]
\item $M$ is more popular than $M'$, i.e., $\phi(M,M') > \phi(M',M)$;
\item $\phi(M,M') = \phi(M',M)$ and $|M| > |M'|$.
\end{enumerate}
\end{definition}

When $M$ and $M'$ gather the same number of votes in the election between $M$ and $M'$, instead of declaring these matchings as 
incomparable (as done under the ``more popular than'' relation), it seems natural to regard the larger of $M,M'$ as the {\em winner} 
of the election. Condition~(ii) of the {\em defeats} relation exactly captures this. We define dominant matchings to be those popular 
matchings that are never defeated (as per Definition~\ref{def:defeat}).

\begin{definition}
\label{def:dominant}
Matching $M$ is {\em dominant} if there is no matching that {\em defeats} it; in other words, $M$ is popular and for any matching 
$M'$, if $|M'| > |M|$, then $M$ is more popular than~$M'$. 
\end{definition}

Note that a dominant matching has to be a maximum size popular matching since smaller-sized popular matchings get defeated by a 
popular matching of maximum size.  However not every maximum size popular matching is a dominant matching, as the example (from~\cite{HK13}) in Fig.~\ref{fig:0} demonstrates.

\begin{figure}[h]
\vspace*{-4mm}
\tikzstyle{vertex} = [circle, draw=black, fill=black, inner sep=0pt,  minimum size=5pt]
\tikzstyle{edgelabel} = [circle, fill=white, inner sep=0pt,  minimum size=15pt]
\centering
	\pgfmathsetmacro{\d}{1.7}
\begin{minipage}{0.4\textwidth}
\[
\begin{array}{llllllll}
a_1 \ : & & b_1 \ & b_2 \ & b_3 \ & \hspace*{0.5in}b_1: a_1 \ & a_2 \ & a_3\\
a_2 \ : & & b_1 \ & b_2 \ &       &   \hspace*{0.5in} b_2: a_1 \ & a_2\  \\
a_3 \ : & & b_1 \ &     \  &      &     \hspace*{0.5in} b_3: a_1
\end{array}
\]
\end{minipage}\hspace{10mm}\begin{minipage}{0.4\textwidth}
\begin{tikzpicture}[scale=0.8, transform shape]
	\node[vertex, label=above:$a_1$] (a1) at (0,0) {};
	\node[vertex, label=left:$b_2$] (b1) at ($(a1) + (0, -\d)$) {};
	\node[vertex, label=above:$b_1$] (b2) at (\d,0) {};
	\node[vertex, label=right:$a_2$] (a2) at ($(b2) + (0, -\d)$) {};
	\node[vertex, label=above:$b_3$] (b3) at (-\d,0) {};
	\node[vertex, label=above:$a_3$] (a3) at ($(a1) + (2*\d, 0)$) {};
    
	\draw [very thick, blue] (a1) -- node[edgelabel, near start] {2} node[edgelabel, near end] {1} (b1);
	\draw [very thick, orange] (a1) -- node[edgelabel, near start] {1} node[edgelabel, near end] {1} (b2);
    \draw [very thick] (a1) -- node[edgelabel, near start] {3} (b3);
	\draw [very thick, orange] (a2) -- node[edgelabel, near start] {2} node[edgelabel, near end] {2} (b1);
	\draw [very thick, blue] (a2) -- node[edgelabel, near start] {1} node[edgelabel, near end] {2} (b2);
    \draw [very thick] (a3) -- node[edgelabel, near end] {3} (b2);
\end{tikzpicture}
\end{minipage}

\caption{The vertex $b_1$ is the top choice for all $a_i$'s and $b_2$ is the second choice for $a_1$ and $a_2$ while $b_3$ is the third choice for~$a_1$. The preference lists of the $b_i$'s are symmetric. There are 2 maximum size popular matchings here: $M_1 = \{(a_1,b_1),(a_2,b_2)\}$ and $M_2 = \{(a_1,b_2),(a_2,b_1)\}$. The matching $M_1$ is not dominant since the larger matching $M_3 = \{(a_1,b_3),(a_2,b_2),(a_3,b_1)\}$ defeats it. The matching $M_2$ is dominant since $M_2$ is more popular than~$M_3$.}
\label{fig:0}
\vspace*{-5mm}
\end{figure}
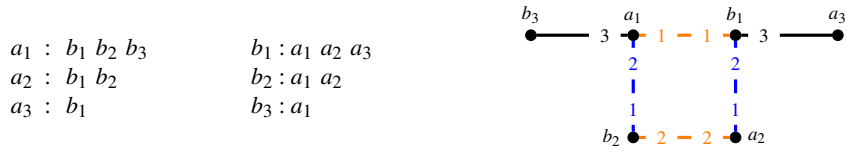



Analogous to Definition~\ref{def:dominant}, we can define the following subclass of popular matchings: 
those popular matchings $M$ such that {\em for any matching $M'$, if $|M'| < |M|$ then $M$ is more popular than $M'$}. 
It is easy to show that this class of matchings is 
exactly the set of stable matchings. That is, we can show that any popular matching $M$ that 
is more popular than every {\em smaller-sized} matching is a stable matching and conversely, every
stable matching is more popular than any smaller-sized matching. 
Thus dominant matchings are to the class of maximum size popular matchings what stable
matchings are to the class of minimum size popular matchings: these are popular matchings that carry 
the proof of their maximality (similarly, minimality) by
being more popular than every matching of larger (resp., smaller) size.

\subsubsection{Our contribution.} 
Theorem~\ref{main-thm} is our main result here. This enables us to solve the popular edge problem in linear time.
\begin{theorem}
\label{main-thm}
If there exists a popular matching in $G = (A\cup B,E)$ that contains the edge $e^*$, then there exists 
either a stable matching in $G$ that contains $e^*$ or a dominant matching in $G$ that contains~$e^*$.
\end{theorem}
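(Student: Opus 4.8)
\emph{Proof plan.} Write $e^{*}=(a^{*},b^{*})$ and fix a popular matching $M$ with $e^{*}\in M$; if $M$ is already stable or dominant we are done, so assume it is neither. The plan is to attempt to realise $e^{*}$ in a stable matching and, when that attempt provably fails, to realise it in a dominant matching instead, using the popularity of $M$ as the certificate that guarantees at least one attempt succeeds.

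For the stable attempt I would form the instance $G^{e^{*}}$ from $G$ by truncating $a^{*}$'s preference list to just $b^{*}$ and $b^{*}$'s to just $a^{*}$, and compute (by Gale--Shapley) a stable matching $S$ of $G^{e^{*}}$; necessarily $e^{*}\in S$. If $S$ has no blocking pair in $G$ then $S$ is a stable matching of $G$ through $e^{*}$ and we are done. Otherwise, any blocking pair of $S$ in $G$ must be of the form $(a^{*},b)$ with $a^{*}$ preferring $b$ to $b^{*}$, or (symmetrically) $(a,b^{*})$ with $b^{*}$ preferring $a$ to $a^{*}$ --- exactly the pairs ``hidden'' by the truncation. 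The real work is to show that in this case $e^{*}$ is a dominant edge, and here is where $M$ enters.

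For this I would use the dual-certificate description of popular matchings: $M$ is popular iff it admits a witness $\vec\alpha\in\{-1,0,+1\}^{A\cup B}$ vanishing on $M$-unmatched vertices, with $\sum_{u}\alpha_{u}=0$ and $\alpha_{a}+\alpha_{b}\ge \mathrm{wt}_{M}(a,b)$ for every edge (with equality on edges of $M$), where $\mathrm{wt}_{M}(a,b)$ records, with values in $\{-1,+1\}$, how each of $a,b$ votes between its $M$-partner and the other. Under this description, stable matchings of $G$ are precisely the popular matchings admitting the all-zero witness, while dominant matchings are precisely those admitting a witness that is ``saturated'' on the matched vertices (equivalently, stable matchings of the two-level instance in which the $+1$-vertices are promoted); both classes inherit the distributive-lattice and rotation structure of stable matchings of the associated derived instance. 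Starting from a witness $\vec\alpha$ of $M$, I would (i)~observe that the failure of the stable attempt --- the blocking pair of $S$ in $G$ at $a^{*}$ or $b^{*}$ --- forces $\vec\alpha$ to carry a nonzero label near $e^{*}$, so $e^{*}$ does not lie in the ``stable core''; then (ii)~apply rotations on the derived instance $G_{\vec\alpha}$ that do not touch $e^{*}$ to drive $\vec\alpha$ to its saturated, dominant form while keeping $e^{*}$ in the matching, and project back to a dominant matching containing $e^{*}$.

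The step I expect to be the main obstacle is the dichotomy itself: an edge known to lie in \emph{some} popular matching need not a priori lie in a popular matching at either end of the popularity spectrum, and the rotations leading to the dominant form might conceivably all pass through $e^{*}$. Ruling this out is the crux, and it is precisely where the hypothesis that $M$ is popular and contains $e^{*}$ must be used in full: one has to show that the rotation surgery on $G_{\vec\alpha}$ can be performed off $e^{*}$, that it preserves popularity, and that it terminates at a matching which is simultaneously of maximum size and more popular than every larger matching. Once this dichotomy is in hand, the linear-time algorithm for the popular edge problem is immediate: compute a stable matching of $G^{e^{*}}$ and (the projection of) a stable matching of the forced two-level instance, and test whether either realises $e^{*}$.
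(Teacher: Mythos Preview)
Your plan diverges substantially from the paper's route, and in its present form it has a real gap at the step you yourself flag as the crux.

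The paper never attempts a stable matching first and then falls back to dominant. Instead it takes the given popular matching $M$ and \emph{decomposes} it: using the $(\pm,\pm)$ edge labels, it builds sets $A_0,A_1,B_0,B_1$ by propagating in $G_M$ from $(+,+)$ edges, sets $A'=A_0\cup A_1$, $B'=B_0\cup B_1$, and writes $M=M_0\cupdot M_1$ where $M_0=M\restriction(A'\cup B')$ is a perfect (hence dominant) matching on $A'\cup B'$ and $M_1=M\setminus M_0$ is stable on the complement $Y\cup Z$. The dichotomy is then immediate: either $e^*\in M_0$ or $e^*\in M_1$. In the first case the paper runs a Gale--Shapley-type procedure on $Y\cup Z$ (starting from $M_1$) to upgrade $M_1$ to a dominant matching $M_1^*$ on that subgraph, and proves by a case analysis on cross-edges that $M_0\cup M_1^*$ is dominant in all of $G$; in the second case it ``demotes'' the men in $A_1$ and reruns Gale--Shapley on $A'\cup B'$ to downgrade $M_0$ to a stable $M_0'$, and proves $M_0'\cup M_1$ is stable in $G$. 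In both cases the half containing $e^*$ is left untouched, so $e^*$ survives automatically --- there is no rotation surgery and no risk of a rotation passing through $e^*$.

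Your witness-based plan, by contrast, leaves two substantive holes. First, step~(i) is not justified: the blocking pair you detect is a blocking pair of the \emph{auxiliary} stable matching $S$ of $G^{e^*}$, not of $M$, and it is unclear why this forces the witness $\vec\alpha$ of $M$ to be nonzero at $a^*$ or $b^*$; a priori $M$ could have $\alpha_{a^*}=\alpha_{b^*}=0$ while still no stable matching of $G$ contains $e^*$. Second --- and you say this yourself --- the claim that rotations on ``$G_{\vec\alpha}$'' can be performed off $e^*$ to reach a saturated witness is exactly the theorem, and you give no mechanism to prove it. The paper sidesteps both issues by never needing to move $e^*$: the decomposition pins $e^*$ to one side, and only the other side is rewritten.
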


\begin{enumerate}[(i)]
\item To show Theorem~\ref{main-thm}, we show that any popular matching $M$ can be partitioned as $M_0 \cupdot M_1$, where $M_0$ is dominant in the
subgraph induced by the vertices matched in $M_0$ and in the subgraph induced by the remaining vertices, $M_1$ is stable. If $M$ contains $e^*$, 
then $e^*$ is either in $M_0$ or in~$M_1$. In the former case, we show
a dominant matching in $G$ that contains $e^*$ and in the latter case, we show a stable matching in $G$ that contains~$e^*$.

\item We also show that every dominant matching in $G$ can be realized as an image (under a simple and natural mapping) of a stable matching in a new graph~
$G'$. This allows us to determine in linear time if there is a dominant matching in $G$ that contains the edge~$e^*$.
This mapping between stable matchings in $G'$ and dominant matchings in $G$ can also be used to find a min-cost dominant matching in~$G$ 
efficiently, where we assume there is a rational cost function on~$E$.

\item When all popular matchings in $G$ have the same size, it could be the case that every popular matching in $G$ is also stable.
That is, in $G$ we have $\{$popular matchings$\} = \{$stable matchings$\}$. We use dominant matchings to efficiently check if
this is the case or not. 
We show that if there exists an unstable popular matching in $G$, then there has to exist an 
unstable dominant matching in $G$. This allows us to design an $O(m^2)$ algorithm (where $|E| = m)$ to check if every 
popular matching in $G$ is also stable. 
\end{enumerate}

\subsubsection{Related Results.} Stable matchings were defined by Gale and Shapley in their landmark paper~\cite{GS62}. The attention of the community was drawn very early to the characterization of \emph{stable edges}: edges and sets of edges that can appear in a stable matching. In the seminal book of Knuth~\cite{Knu76}, stable edges first appeared under the term ``arranged marriages''. Knuth presented an algorithm to find a stable matching with a given stable set of edges or report that none exists. This method is a modified version of the Gale-Shapley algorithm and runs in $O(m)$ time. 
Gusfield and Irving~\cite{GI89} provided a similar, simple method for the stable edge problem with the same running time. 

The stable edge problem is a highly restricted case of the \emph{min-cost stable matching problem}, where a stable matching that has the 
minimum edge cost among all stable matchings is sought. With the help of edge costs, various stable matching problems can be modeled, such 
as stable matchings with restricted edges~\cite{DFFS03} or egalitarian stable matchings~\cite{ILG87}.
A simple and elegant formulation of the stable matching polytope of $G = (A \cup B,E)$ is known~\cite{Rot92} and using this, a min-cost stable 
matching can be computed in polynomial time via linear programming. 

The size of a stable matching in $G$ can be as small as $|M_{\max}|/2$, where $M_{\max}$ is a maximum size matching in $G$.
Relaxing stability to popularity yields larger matchings and it is easy to show that a largest popular matching has
size at least~$2|M_{\max}|/3$. Efficient algorithms for computing a popular matching of maximum size were 
shown in~\cite{HK13,Kav12-journal}. The algorithm in \cite{HK13} time runs in $O(mn_0)$ time, where $n_0 = \min(|A|,|B|)$ and the algorithm in 
\cite{Kav12-journal} runs in linear time. In fact, both these algorithms compute dominant matchings -- thus 
dominant matchings always exist in a stable marriage instance with strict preference lists. Interestingly,
all the polynomial time algorithms currently known for 
computing {\em any} popular matching in $G=(A\cup B,E)$ compute either a stable matching or a dominant matching in~$G$. 

\paragraph{Organization of the paper.}
A characterization of dominant matchings is given in Section~\ref{sec:char}. In Section~\ref{sec:dom-mat} we show a surjective mapping between
stable matchings in a larger graph $G'$ and dominant matchings in~$G$. Section~\ref{sec:pop-edge} has our algorithm for the popular edge problem and
Section~\ref{sec:dom-vs-stab} has our algorithm to test if every popular matching in $G$ is also stable.
The Appendix has a brief overview  of the  maximum size popular matching algorithms in \cite{HK13,Kav12-journal}.

\section{A characterization of dominant matchings}
\label{sec:char}

Let $M$ be any matching in $G = (A \cup B, E)$ and let $M(u)$ denote $u$'s partner in $M$, where $u \in A \cup B$. Label each edge $e=(a,b)$ in $E\setminus M$ by the pair $(\alpha_e,\beta_e)$,
where $\alpha_e = \vote_a(b,M(a))$  and $\beta_e =  \vote_b(a,M(b))$, i.e.,  $\alpha_e$ is $a$'s vote for $b$ vs.\ $M(a)$ and $\beta_e$ is
$b$'s vote for $a$ vs.~$M(b)$.  The function $\vote(\cdot,\cdot)$ is defined below.
\begin{definition}
For any $u \in A\cup B$ and neighbors $x$ and $y$ of $u$, define $u$'s
vote between $x$ and $y$ as:
\begin{equation*} 
\vspace*{-2mm}
\label{vote-defn}
\vote_u(x,y) = \begin{cases} +   & \text{if  $u$ prefers $x$ to $y$}\\
	                     - &  \text{if  $u$ prefers $y$ to $x$}\\			
                              0 & \text{otherwise (i.e., $x = y$).}
\end{cases}
\end{equation*}
\end{definition}
If a vertex $u$ is unmatched, then $M(u)$ is undefined and we define $\vote_u(v,M(u))$ to be $+$ for all 
neighbors of $u$ since every vertex prefers to be matched than be unmatched.
Note that if an edge $(a,b)$ is labeled $(+,+)$, then $(a,b)$ blocks $M$ in the stable matching sense.
If an edge $(a,b)$ is labeled $(-,-)$, then both $a$ and $b$ prefer their respective partners in $M$
to each other. Let $G_M$ be the subgraph of $G$ obtained by deleting edges that are labeled~$(-,-)$.
The following theorem  characterizes popular matchings.

\begin{theorem}[from \cite{HK13}]
\label{thm:pop-char}
A matching $M$ is popular if and only if the following three conditions are satisfied in the subgraph $G_M$:
\begin{enumerate}
\item[(i)]There is no alternating cycle with respect to $M$ that contains a $(+,+)$ edge.
\item[(ii)]There is no alternating path starting from an unmatched vertex wrt $M$ that contains a $(+,+)$ edge.
\item[(iii)]There is no alternating path with respect to $M$ that contains two or more $(+,+)$ edges.
\end{enumerate}
\end{theorem}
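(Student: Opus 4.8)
The plan is to prove both directions via the contrapositive, reducing the global comparison of $M$ against an arbitrary rival to a purely local computation on the components of a symmetric difference. Fix any matching $N$ and set $\Delta(M,N) := \phi(N,M) - \phi(M,N) = \sum_{u}\vote_u(N(u),M(u))$, reading $+,-,0$ as $+1,-1,0$; then $M$ is popular exactly when $\Delta(M,N)\le 0$ for every $N$. Since every vertex outside $M\oplus N$ keeps the same partner and contributes $0$, the quantity $\Delta(M,N)$ splits as a sum over the components of $M\oplus N$, each an $M$-alternating path or cycle. Hence $M$ is popular if and only if no $M$-alternating path or cycle has strictly positive contribution, because a single positive component $C$ already yields the more popular rival $N=M\oplus C$.

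Next I would compute the contribution $\Delta_C$ of a component purely from the edge labels. Each vertex has at most one $N$-edge, and $\alpha_e,\beta_e$ are exactly that vertex's vote on it; so $\sum_{e\in (N\setminus M)\cap C}(\alpha_e+\beta_e)$ equals $2p-2r$, where $p,r$ count the $(+,+)$ and $(-,-)$ edges of $C$ (mixed edges cancel). The only vertices this sum misses are endpoints matched in $M$ but unmatched in $N$, each voting $-1$. Tallying endpoint types gives: for a cycle $\Delta_C = 2p-2r$; for a path with an $M$-unmatched endpoint $\Delta_C = 2p-2r$ or $2p-2r-1$ (according to whether the other end is also $M$-unmatched); and for a path with both endpoints $M$-matched (which, one checks, forces both to be $N$-unmatched) $\Delta_C = 2p-2r-2$.

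I would then pass to $G_M$ by arguing that a $(-,-)$ edge can always be removed without destroying positivity: it contributes $-2$, and cutting $C$ at it leaves a strictly shorter alternating structure retaining all $(+,+)$ edges, whose contribution is still positive. Iterating, any positive component is replaced by a positive one in $G_M$, where $r=0$. With $r=0$ the thresholds fall out exactly: a cycle is positive iff $p\ge 1$ (condition~(i)); a path with an unmatched endpoint is positive iff $p\ge 1$ (condition~(ii)); and a path with both endpoints matched is positive iff $p\ge 2$ (condition~(iii)). For the forward implication I locate a positive component in $G_M$ and read off the violated condition; for the converse I take any structure witnessing a violation, form $N=M\oplus C$, and use the formulas to certify $\Delta(M,N)>0$, so $M$ is not popular.

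The main obstacle I anticipate is getting the endpoint bookkeeping and the reduction to $G_M$ exactly right, since it is precisely the asymmetric endpoint contributions that separate the three regimes — one $(+,+)$ edge suffices to defeat $M$ along a cycle or a path touching an unmatched vertex, whereas two are needed on a path between two matched vertices. The delicate point is verifying that stripping $(-,-)$ edges never raises the required number of $(+,+)$ edges and always yields a genuine $M$-alternating structure of one of the three catalogued types; in particular, handling the transition where cutting a cycle at a $(-,-)$ edge turns it into a both-endpoints-matched path (moving it from the cycle case to condition~(iii)) is the one step that needs care.
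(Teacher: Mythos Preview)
The paper does not give its own proof of this theorem; it is quoted from \cite{HK13} and used as a black box throughout. So there is nothing in the paper to compare your argument against.

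That said, your plan is the standard proof of this characterization and is correct. Two points deserve to be made explicit when you write it up. First, in the direction ``condition violated $\Rightarrow$ $M$ not popular'' you need $N=M\oplus C$ to be a matching; this fails if an endpoint of $C$ is incident to a non-$M$ edge of $C$ yet is $M$-matched outside $C$. The fix is to extend (or truncate) the witnessing path by that terminal $M$-edge before forming $N$; the $(+,+)$ edges survive and your formulas still give $\Delta>0$. Second, in the reduction to $G_M$, cutting a path at a $(-,-)$ edge produces \emph{two} pieces whose contributions sum to the original $\Delta$, so you should say ``at least one piece stays strictly positive'' and recurse on that one. Observe also that a $(-,-)$ edge can never be incident to an $M$-unmatched vertex (such a vertex would vote $+$), so the cut never creates degenerate pieces, and every piece descended from a cycle keeps both endpoints $M$-matched; hence when it finally lands in $G_M$ with positive contribution it must carry $p'\ge 2$, violating~(iii) rather than~(i) --- which resolves exactly the transition you flagged as delicate.
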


Lemma~\ref{thm:domn-char} characterizes those popular matchings that are dominant. 
The ``if'' side of Lemma~\ref{thm:domn-char} was shown in \cite{Kav12-journal}:
it was shown that if there is no augmenting path with respect to a popular matching $M$ in $G_M$ then $M$ is more popular than all larger 
matchings, thus $M$ is a maximum size popular matching. 

Here we show that the converse holds as well, i.e., if $M$ is a popular matching such that $M$ is more popular than all larger matchings,
in other words, if $M$ is a dominant matching,
then there is no augmenting path with respect to $M$ in~$G_M$.
\begin{lemma}
\label{thm:domn-char}
A popular matching $M$ is dominant if and only if there is no augmenting path wrt $M$ in~$G_M$.
\end{lemma}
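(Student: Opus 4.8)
The ``if'' direction is already established in \cite{Kav12-journal}, so the work is the converse: assuming $M$ is dominant, derive that $G_M$ has no augmenting path with respect to $M$. The plan is to prove the contrapositive: suppose $P$ is an augmenting path with respect to $M$ in $G_M$, and exhibit a matching $M'$ that defeats $M$ (thereby contradicting dominance). The natural candidate is $M' = M \oplus P$, which satisfies $|M'| = |M| + 1 > |M|$; so by the definition of dominant it suffices to show that $M$ is \emph{not} more popular than $M'$, i.e., $\phi(M', M) \ge \phi(M, M')$. In fact one should aim to show the stronger statement that every vertex off $P$ is indifferent between $M$ and $M'$, while along $P$ the votes balance out in $M'$'s favor (or at worst tie).

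First I would note that vertices not on $P$ have the same partner in $M$ and $M'$, so they contribute $0$ to both $\phi(M,M')$ and $\phi(M',M)$; the entire comparison happens along $P$. Write $P = u_0, x_1, x_2, \dots, x_k, u_{k+1}$ where $u_0$ and $u_{k+1}$ are the (distinct) unmatched endpoints; since $G$ is bipartite, $u_0, u_{k+1}$ lie on opposite sides, and the edges of $P$ alternate between $E \setminus M$ and $M$, starting and ending with non-matching edges. The two endpoints $u_0, u_{k+1}$ are unmatched in $M$ and matched in $M'$, so each casts a vote for $M'$: that is $+2$ for $M'$ already. For each internal vertex $x_i$ of $P$, its $M$-partner and its $M'$-partner are the two $P$-neighbors that it sees along the path; its vote is determined by its preference between these two neighbors. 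The key structural fact to invoke is that $P$ lies in $G_M$, which has all $(-,-)$ edges deleted. Consider a non-matching edge $e = (a,b)$ of $P$ traversed by the path: it is labeled $(\alpha_e, \beta_e) \in \{(+,+), (+,-), (-,+)\}$. Walking along $P$ and bookkeeping, each internal vertex's vote equals $\alpha_e$ or $\beta_e$ of one of its two incident non-matching $P$-edges; the crucial point is that for each non-matching edge $e$ on $P$, the two coordinates $\alpha_e, \beta_e$ contribute to the votes of the two endpoints of $e$, one of which may be an endpoint $u_0$ or $u_{k+1}$ (treated as $+$). The claim to establish by this accounting is: $\sum_{v \in P}\bigl(\text{vote of } v \text{ for } M' \text{ vs } M\bigr) \ge 0$, with the $+2$ from the endpoints offsetting any negative internal votes. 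Since $P$ has at most one $(-,-)$-free... more precisely, since $G_M$ contains no $(-,-)$ edges, no non-matching edge of $P$ contributes a double $-$, so each non-matching edge contributes at most one $-$ among its two endpoints, and the net along $P$ is at least $2 - (\text{number of non-matching edges contributing a single } -) + (\text{number of } + \text{ votes})$; a short parity/counting argument finishes it, yielding $\phi(M',M) \ge \phi(M,M')$.

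The main obstacle I anticipate is the bookkeeping of votes along $P$ — making the accounting airtight so that the two unmatched endpoints' votes genuinely dominate the internal deficit, rather than merely heuristically. One clean way to organize it: pair up the endpoints' $+$ votes with the first and last non-matching edges of $P$. For the first edge $e_1 = (u_0, x_1)$: $u_0$ votes $+$, and $x_1$'s vote for $M'$ vs $M$ is exactly $\alpha_{e_1}$ or $\beta_{e_1}$ (whichever coordinate is $x_1$'s); since $e_1 \in G_M$, if that coordinate is $-$ then the other coordinate (which is $u_0$'s, namely $+$) is consistent, so the pair $(u_0, x_1)$ contributes $\ge 0$. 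Symmetrically for the last edge. For each \emph{internal} non-matching edge $e = (a,b)$ (both $a,b$ internal to $P$), its contribution to $\phi(M',M) - \phi(M,M')$ is $[\alpha_e = +] + [\beta_e = +] - [\alpha_e = -] - [\beta_e = -]$ restricted to the coordinates actually belonging to $P$-traversal — and since $e$ is not $(-,-)$, this is $\ge 0$ when both endpoints take their $P$-edge-$e$ coordinate. Assembling these nonnegative contributions gives $\phi(M',M) - \phi(M,M') \ge 0$, i.e., $M$ is not more popular than $M'$, so $M'$ defeats $M$, contradicting the assumption that $M$ is dominant. Hence a dominant $M$ admits no augmenting path in $G_M$, completing the converse.
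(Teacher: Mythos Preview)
Your approach is essentially the paper's: argue the contrapositive for the ``only if'' direction by setting $M' = M \oplus P$ for an augmenting path $P$ in $G_M$, then count votes along $P$ via the edge labels. Two remarks. First, the paper sharpens the accounting by invoking condition~(ii) of Theorem~\ref{thm:pop-char}: since $P$ starts at an unmatched vertex and $M$ is popular, \emph{no} non-matching edge of $P$ can be labeled $(+,+)$, so every such edge is $(+,-)$ or $(-,+)$ and the vote count is an exact tie $\phi(M,M') = \phi(M',M)$; this removes all the hedging in your bookkeeping. Second, a small slip in your write-up: each internal vertex of $P$ is incident to exactly \emph{one} non-matching $P$-edge (and one matching $P$-edge), not two, so its vote is simply its own coordinate on that unique non-matching edge; once you fix this, the sum $\phi(M',M)-\phi(M,M')$ is exactly the sum over non-matching edges $e$ of $[\alpha_e{=}+]+[\beta_e{=}+]-[\alpha_e{=}-]-[\beta_e{=}-]$, with no need for the separate ``$+2$ from the endpoints'' term (those are already the endpoint coordinates on the first and last edges).
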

\begin{proof}
Let $M$ be a popular matching in~$G$. Suppose there is an augmenting path $\rho$ with respect to  $M$ in~$G_M$.
Let us use $M \approx M'$ to denote both matchings getting the same number of votes in an election between them, i.e., $\phi(M,M') = \phi(M',M)$.
We will now show that $M\oplus\rho \approx M$. Since $M\oplus\rho$ is a larger matching than $M$, if $M\oplus\rho \approx M$, 
then it means that $M\oplus\rho$ defeats $M$, thus $M$ is not dominant.

Consider $M\oplus\rho$ versus $M$: every vertex that does not belong to the path $\rho$ gets the same partner in
both these matchings. Hence vertices outside $\rho$ are indifferent between these two matchings. Consider the vertices on~$\rho$.
In the first place, there is no edge in $\rho\setminus M$ that is labeled $(+,+)$, otherwise that would contradict
condition~(ii) of Theorem~\ref{thm:pop-char}. Since the path $\rho$ belongs to $G_M$, no edge is labeled $(-,-)$ either.
Hence every edge in $\rho\setminus M$ is labeled either $(+,-)$ or~$(-,+)$. Note that the $+$ signs count the number of votes for
$M\oplus\rho$ while the $-$ signs count the number of votes for~$M$. Thus the number of votes for $M\oplus\rho$ equals the
number of votes for $M$ on vertices of $\rho$, and thus in the entire graph~$G$. Hence $M\oplus\rho \approx M$.

Now we show the other direction: if there is no augmenting path with respect to a popular matching $M$ in $G_M$ then $M$ is dominant.
Let $M'$ be a larger matching. Consider $M \oplus M'$ in $G$:
this is a collection of alternating paths and alternating cycles and since $|M'| > |M|$, there is at least one augmenting path with respect to
$M$ here. Call this path $p$, running from vertex $u$ to vertex~$v$. Let us count the number of votes for $M$ versus $M'$ among the vertices of~$p$. 

\begin{figure}[h]
\centerline{\resizebox{0.8\textwidth}{!}{\input{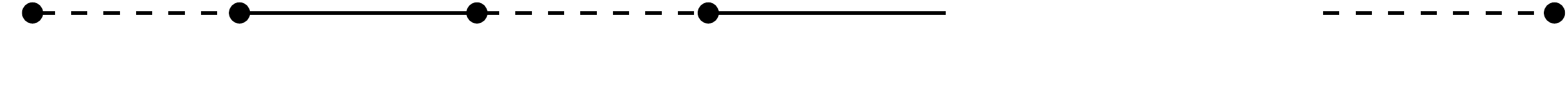_t}}}
\caption{The $u$-$v$ augmenting path $p$ in $G$ where the bold edges are in $M$; at least one edge here (say, $(x,y)$) is labeled~$(-,-)$.}
\label{fig:thm2}
\end{figure}

No edge in $p$ is labeled $(+,+)$ as that would contradict condition~(ii) of Theorem~\ref{thm:pop-char}, thus
all the edges of $M'$ in $p$ are labeled $(-,+)$, $(+,-)$, or~$(-,-)$. Since $p$ does not exist in $G_M$, there is at least one edge that 
is labeled $(-,-)$ here (see Fig.~\ref{fig:thm2}): thus among the vertices of $p$, $M$ gets more votes than $M'$ (recall that $+$'s are votes for 
$M'$ and $-$'s are votes for $M$). Thus $M$ is more popular than $M'$ among the vertices of~$p$.

By the popularity of $M$, we know that $M$ gets at least as many votes as $M'$ over all other paths and cycles in $M \oplus M'$; this is because
if $\rho$ is an alternating path/cycle in $M \oplus M'$ such that the number of vertices on $\rho$ that prefer $M'$ to $M$ is more than 
the number that prefer $M$ to $M'$, then $M\oplus\rho$ is more popular than $M$, a contradiction to the popularity of~$M$.
Thus adding up over all the vertices in $G$, it follows that $\phi(M,M') > \phi(M',M)$. 
Hence $M$ is more popular than any larger matching and so $M$ is a dominant matching. \qed
\end{proof}

Corollary~\ref{cor0} is a characterization of dominant matchings. This follows immediately from Lemma~\ref{thm:domn-char} and
Theorem~\ref{thm:pop-char}.
\begin{corollary}
\label{cor0}
Matching $M$ is a dominant matching if and only if $M$ satisfies conditions~(i)-(iii) of Theorem~\ref{thm:pop-char} and 
condition~(iv): there is no augmenting path wrt $M$ in~$G_M$.
\end{corollary}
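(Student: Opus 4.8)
The plan is to obtain Corollary~\ref{cor0} by simply chaining Theorem~\ref{thm:pop-char} with Lemma~\ref{thm:domn-char}, exploiting the fact that dominance is by definition a strengthening of popularity. I would prove the two implications separately. For the forward direction, suppose $M$ is dominant. By Definition~\ref{def:dominant}, $M$ is in particular popular, so Theorem~\ref{thm:pop-char} immediately yields that conditions~(i)--(iii) hold in $G_M$. Moreover, since $M$ is a popular matching that is dominant, the ``only if'' half of Lemma~\ref{thm:domn-char} gives that there is no augmenting path with respect to $M$ in $G_M$, i.e.\ condition~(iv) holds.

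For the converse, suppose $M$ satisfies conditions~(i)--(iv). Conditions~(i)--(iii) holding in $G_M$ are exactly the hypotheses of Theorem~\ref{thm:pop-char}, so $M$ is popular. We therefore have a popular matching $M$ with no augmenting path with respect to $M$ in $G_M$, so the ``if'' half of Lemma~\ref{thm:domn-char} applies and tells us $M$ is dominant. Combining the two directions gives the claimed equivalence.

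I do not expect any genuine technical obstacle here: the substantive work has already been isolated into the characterization of popularity (Theorem~\ref{thm:pop-char}) and the characterization of dominance among popular matchings (Lemma~\ref{thm:domn-char}), and the corollary is just their conjunction. The only point that requires care is the logical bookkeeping, namely that ``dominant'' is defined to entail ``popular'': condition~(iv) on its own does not characterize dominant matchings, which is exactly why conditions~(i)--(iii) must be listed alongside~(iv) in the statement.
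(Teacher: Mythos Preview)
Your proposal is correct and matches the paper's own approach exactly: the paper simply states that the corollary follows immediately from Lemma~\ref{thm:domn-char} and Theorem~\ref{thm:pop-char}, and your argument is a faithful unpacking of that one-line justification.
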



\section{The set of dominant matchings}
\label{sec:dom-mat}

In this section we show a surjective mapping from the set of stable matchings 
in a new instance $G' = (A'\cup B', E')$ to the set of dominant matchings in $G = (A\cup B,E)$. 
It will be convenient to refer to vertices in $A$ and $A'$ as {\em men} and vertices in $B$ and $B'$ as {\em women}. 
The construction of $G' = (A'\cup B', E')$ is as follows. 

Corresponding to every man $a \in A$, there will be
two men $a_0$ and $a_1$ in $A'$ and one woman $d(a)$ in~$B'$. The vertex $d(a)$ will be referred to as the dummy woman
corresponding to~$a$. Corresponding to every woman $b \in B$, there will be exactly one woman in $B'$ -- for the sake of simplicity, we will use 
$b$ to refer to this woman as well. Thus $B' = B \cup d(A)$, where $d(A) = \{d(a): a \in A\}$ is the set of dummy women. 

Regarding the other side of the graph, $A' = A_0 \cup A_1$, where $A_i = \{a_i: a \in A\}$ for $i = 0,1$, and vertices in $A_0$ are called level~0 
vertices, while vertices in $A_1$ are called level~1 vertices. 

We now describe the edge set $E'$ of~$G'$. 
For each $a \in A$, the vertex $d(a)$ has exactly two neighbors: these are $a_0$ and $a_1$ and
$d(a)$'s preference order is $a_0$ followed by~$a_1$. 
The dummy woman $d(a)$ is $a_1$'s most preferred neighbor and $a_0$'s least preferred neighbor.
\begin{itemize}
\item The preference list of $a_0$ is all the neighbors of $a$ (in $a$'s preference order) followed by~$d(a)$.
\item The preference list of $a_1$ is $d(a)$ followed by the neighbors of $a$ (in $a$'s preference order) in~$G$.
\end{itemize}

For any $b \in B$, its preference list in $G'$ is level~1 neighbors in the same order of preference as in $G$ followed by
level~0 neighbors in the same order of preference as in~$G$. 
For instance,  if $b$'s preference list in $G$ is $a$ followed by $a'$, then $b$'s  preference list in $G'$ is top-choice
$a_1$, then $a'_1$, and then $a_0$, and the last-choice is $a'_0$. We show an example in Fig.~\ref{fig:new-graph}.

\begin{center}
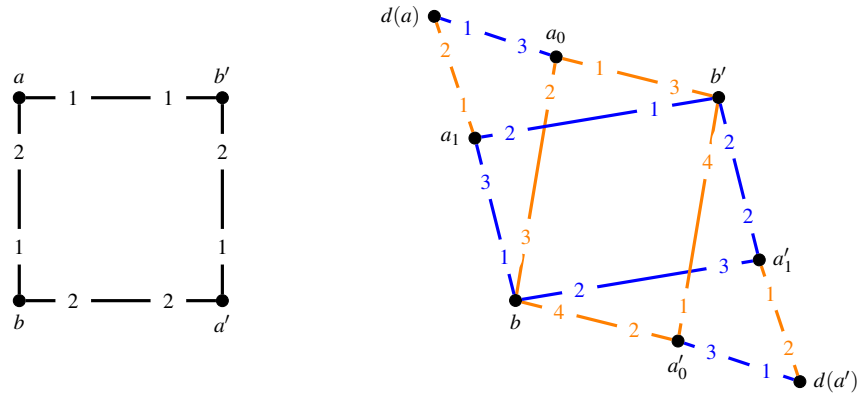
\begin{figure}[h]
\tikzstyle{vertex} = [circle, draw=black, fill=black, inner sep=0pt,  minimum size=5pt]
\tikzstyle{edgelabel} = [circle, fill=white, inner sep=0pt,  minimum size=15pt]
\centering
	\pgfmathsetmacro{\d}{3}
	\pgfmathsetmacro{\b}{4}
	\pgfmathsetmacro{\e}{0.6}
\begin{minipage}{0.3\textwidth}
\begin{tikzpicture}[scale=0.9, transform shape]
	\node[vertex, label=above:$a$] (a1) at (0,0) {};
	\node[vertex, label=below:$b$] (b1) at ($(a1) + (0, -\d)$) {};
	\node[vertex, label=above:$b'$] (b2) at (\d,0) {};
	\node[vertex, label=below:$a'$] (a2) at ($(b2) + (0, -\d)$) {};

	\draw [very thick] (a1) -- node[edgelabel, near start] {2} node[edgelabel, near end] {1} (b1);
	\draw [very thick] (a1) -- node[edgelabel, near start] {1} node[edgelabel, near end] {1} (b2);
	\draw [very thick] (a2) -- node[edgelabel, near start] {2} node[edgelabel, near end] {2} (b1);
	\draw [very thick] (a2) -- node[edgelabel, near start] {1} node[edgelabel, near end] {2} (b2);
\end{tikzpicture}
\end{minipage}\begin{minipage}{0.3\textwidth}
\begin{tikzpicture}[scale=0.9, transform shape]
	\node[vertex, label=left:$a_1$] (a1b) at ($(0,0)-(\e, \e)$) {};
	\node[vertex, label=above:$a_0$] (a1r) at ($(0,0)+(\e, \e)$) {};
	\node[vertex, label=left:$d(a)$] (ba1) at ($(0,0)+(-2*\e, 2*\e)$) {};
	
	\node[vertex, label=below:$a'_0$] (a2r) at ($(\d,-\d)-(\e, \e)$) {};
	\node[vertex, label=right:$a'_1$] (a2b) at ($(\d,-\d)+(\e, \e)$) {};
	\node[vertex, label=right:$d(a')$] (ba2) at ($(\d,-\d)+(2*\e, -2*\e)$) {};
	
	\node[vertex, label=below:$b$] (b1) at ($(a1) + (0, -\d)$) {};
	\node[vertex, label=above:$b'$] (b2) at (\d,0) {};

	\draw [very thick, orange] (a1r) -- node[edgelabel, very near start] {2} node[edgelabel, near end] {3} (b1);
	\draw [very thick, blue] (a1b) -- node[edgelabel, near start] {3} node[edgelabel, near end] {1} (b1);
	\draw [very thick, orange] (a2r) -- node[edgelabel, near start] {2} node[edgelabel, near end] {4} (b1);
	\draw [very thick, blue] (a2b) -- node[edgelabel, very near start] {3} node[edgelabel, near end] {2} (b1);
	\draw [very thick, orange] (a1r) -- node[edgelabel, near start] {1} node[edgelabel, near end] {3} (b2);
	\draw [very thick, blue] (a1b) -- node[edgelabel, very near start] {2} node[edgelabel, near end] {1} (b2);
	\draw [very thick, orange] (a2r) -- node[edgelabel, very near start] {1} node[edgelabel, near end] {4} (b2);
	\draw [very thick, blue] (a2b) -- node[edgelabel, near start] {2} node[edgelabel, near end] {2} (b2);
	
	\draw [very thick, blue] (a1r) -- node[edgelabel, near start] {3} node[edgelabel, near end] {1} (ba1);
	\draw [very thick, orange] (a1b) -- node[edgelabel, near start] {1} node[edgelabel, near end] {2} (ba1);
	\draw [very thick, blue] (a2r) -- node[edgelabel, near start] {3} node[edgelabel, near end] {1} (ba2);
	\draw [very thick, orange] (a2b) -- node[edgelabel, near start] {1} node[edgelabel, near end] {2} (ba2);
	\end{tikzpicture}
\end{minipage}
\caption{The graph $G'$ on the right corresponding to $G$ on the left. We used blue to color edges in $(A_1 \times B) \cup (A_0 \times d(a))$ and orange to color edges in $(A_0 \times B) \cup (A_1 \times d(a))$.}
\label{fig:new-graph}
\end{figure}
\end{center}

\vspace*{-0.5in}

We now define the mapping $T: \{$stable matchings in $G'\} \rightarrow \{$dominant matchings in~$G\}$.
Let $M'$ be any stable matching in~$G$.
\begin{itemize}
\item $T(M')$ is the set of edges obtained by deleting all edges involving vertices in $d(A)$ (i.e., dummy women)
from $M'$ and replacing every edge $(a_i,b) \in M'$, where $b \in B$ and $i \in \{0,1\}$, by the edge~$(a,b)$.
\end{itemize}

It is easy to see that $T(M')$ is a valid matching in~$G$. This is because $M'$ has to match $d(a)$, for every $a \in A$, 
since $d(a)$ is the top-choice for~$a_1$. Thus for each $a \in A$, one of $a_0, a_1$ has to be matched to~$d(a)$. Hence 
at most one of $a_0,a_1$ is matched to a non-dummy woman $b$ and thus $M = T(M')$ is a matching in~$G$.

\paragraph{The proof that $M$ is a dominant matching in $G$.}
This proof is similar to the proof of correctness of the maximum size popular 
matching algorithm in \cite{Kav12-journal}. As described in Section~\ref{sec:char}, in the graph $G$, label 
each edge $e = (a,b)$ in $E \setminus M$ by the pair $(\alpha_e,\beta_e)$, where $\alpha_e \in \{+,-\}$ is 
$a$'s vote for $b$ vs.\ $M(a)$ and $\beta_e \in \{+,-\}$ is $b$'s vote for $a$ vs.~$M(b)$. 

\begin{itemize}
\item It will be useful to assign a value in $\{0,1\}$ to each $a \in A$. 
If $M'(a_1) = d(a)$, then $f(a) = 0$ else $f(a) = 1$. So if $a \in A$ is unmatched in $M$ then $(a_0,d(a)) \in M'$ and so $f(a) = 1$. 
\item We will now define $f$-values for vertices in $B$ as well.
If $M'(b) \in A_1$ then $f(b) = 1$, else $f(b) = 0$.
So if $b \in B$ is unmatched in $M'$ (and thus in $M$) then $f(b) = 0$. 
\end{itemize}

\begin{new-claim}
\label{clm1}
The following statements hold on the edge labels:
\begin{itemize}
\item[(1)] If the edge $(a,b)$ is labeled $(+,+)$, then $f(a) = 0$ and $f(b) = 1$. 
\item[(2)] If $(y,z)$ is an edge such that $f(y) = 1$ and $f(z) = 0$, then $(y,z)$ has to be labeled~$(-,-)$.
\end{itemize}
\end{new-claim}
\begin{proof}
We show part~(1) first. The edge $(a,b)$ is labeled~$(+,+)$. Let $M(a) = z$ and $M(b) = y$. Thus in $a$'s preference list, 
$b$ ranks better than $z$ and similarly, in $b$'s preference list, $a$ ranks better than~$y$. We know from the definition of our 
function $T$ that $M'(z) \in \{a_0,a_1\}$ and  $M'(b) \in \{y_0,y_1\}$.
So there are 4 possibilities: $M'$ contains (1)~$(a_0,z)$ and $(y_0,b)$,
(2)~$(a_1,z)$ and $(y_0,b)$, (3)~$(a_1,z)$ and $(y_1,b)$, (4)~$(a_0,z)$ and~$(y_1,b)$.

We know that $M'$ has no blocking pairs in $G'$ since it is a stable matching.
In (1), the pair $(a_0,b)$ blocks $M'$, and in (2) and (3), the pair $(a_1,b)$ blocks~$M'$. 
Thus the only possibility is (4). That is, $M'(b) \in A_1$ and $M'(a_1) = d(a)$. In other words,
$f(a) = 0$ and $f(b) = 1$.

We now show part~(2) of Claim~\ref{clm1}. We are given that $f(y) = 1$, so $M'(y_0) = d(y)$.
We know that $d(y)$ is $y_0$'s last choice and $y_0$ is adjacent to $z$, thus $y_0$ must have been rejected by~$z$. 
Since we are given that $f(z) = 0$, i.e., $M'(z) \in A_0$, it follows that $M'(z) = u_0$, where $u$ ranks better than $y$ 
in $z$'s preference list in~$G$. 

In the graph $G'$, the vertex $z$ prefers $y_1$ to $u_0$ since it prefers any level~1 neighbor 
to a level~0 neighbor. Thus $y_1$ is matched to a neighbor that is ranked better than $z$ in $y$'s preference list, i.e., $M'(y_1) = v$, 
where $y$ prefers $v$ to~$z$. We have the edges $(y,v)$ and $(u,z)$ in $M$, thus both $y$ and $z$ prefer their respective partners in $M$ 
to each other. Hence the edge $(y,z)$ has to be labeled~$(-,-)$. \qed
\end{proof}

Lemmas~\ref{lem:aug-path} and \ref{lem:popular} shown below, along with Lemma~\ref{thm:domn-char}, 
imply that $M$ is a dominant matching in~$G$.

\begin{lemma}
\label{lem:aug-path}
There is no augmenting path with respect to $M$ in~$G_M$.
\end{lemma}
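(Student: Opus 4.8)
The plan is to argue by contradiction: suppose $\rho$ is an augmenting path with respect to $M$ in $G_M$, running between two vertices that are unmatched in $M$. Since $M$ is popular (this will be established separately in Lemma~\ref{lem:popular}, but for the structural argument here I only need the edge-labelling machinery), no edge of $\rho$ is labeled $(+,+)$ — otherwise condition~(ii) of Theorem~\ref{thm:pop-char} would already be violated; and by construction of $G_M$ no edge is labeled $(-,-)$. So every non-matching edge of $\rho$ is labeled $(+,-)$ or $(-,+)$. The idea is to track the $f$-values along $\rho$ and derive a contradiction using Claim~\ref{clm1}.

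The key observations I would assemble are about the two endpoints and about how $f$ behaves across an edge. First, the endpoints: if $a \in A$ is unmatched in $M$, then $(a_0,d(a)) \in M'$, so $f(a) = 1$; if $b \in B$ is unmatched in $M$, then $b$ is unmatched in $M'$ too, so $f(b) = 0$. Thus $\rho$ starts at a vertex with $f$-value $1$ (if it is an $A$-vertex) or $0$ (if it is a $B$-vertex). Second, using part~(2) of Claim~\ref{clm1} contrapositively on the edges of $\rho$: if an edge $(y,z)$ of $\rho$ is labeled something other than $(-,-)$ — which is the case for every edge of $\rho$, matching or not — then we cannot have $f(y) = 1$ and $f(z) = 0$ simultaneously. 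I would phrase this as: along any edge of $\rho$ we never go from an $f$-value-$1$ endpoint to an $f$-value-$0$ endpoint. Combined with a parity/side bookkeeping (men alternate with women along $\rho$, and the $A$-endpoint of a matching edge $(a,b) \in M$ with $b \in B$ has its $f$-value determined by whether $(a_1,b)$ or $(a_0,b)$ is in $M'$, which by the definition of $f$ on $B$ forces $f(a) = f(b)$), this should propagate a constraint all the way down $\rho$.

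Concretely, I expect the argument to show that $f$ is constant along $\rho$ in the appropriate sense — matching edges preserve the common value $f(a)=f(b)$, and non-matching edges, being labeled $(+,-)$ or $(-,+)$ rather than $(-,-)$, forbid the $1 \to 0$ transition — so starting from the first endpoint we are pinned down, and then reading the other endpoint gives the contradiction: one endpoint would force $f = 1$ on an $A$-vertex and the other $f = 0$ on a $B$-vertex (or vice versa) in a way incompatible with the propagated value. I should double-check the direction conventions here: since both endpoints of an augmenting path are unmatched and the endpoints lie on opposite sides of the bipartition, the two endpoints are one vertex of $A$ (with forced $f=1$) and one of $B$ (with forced $f=0$), and the edge-by-edge analysis will show the path must at some point make the forbidden $1\to0$ crossing on a non-$(-,-)$ edge.

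The main obstacle, I expect, is getting the bookkeeping exactly right: carefully relating $f(a)$ and $f(b)$ across a matching edge $(a,b)\in M$ (this uses that $(a_i,b)\in M'$ for some $i$ and the definitions of $f$ on both sides), and across a non-matching edge handling the two orientations $(+,-)$ versus $(-,+)$ so that Claim~\ref{clm1}(2) applies in the right direction. There may be a subtlety if some vertex of $\rho$ lies in $A$ but is matched in $M$ to $b$ with $(a_0,b)\in M'$, versus $(a_1,b)\in M'$ — I would want to confirm that in both cases $f(a)=f(b)$ follows cleanly from "$M'(b)\in A_1 \Rightarrow f(b)=1$, else $f(b)=0$". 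Once those local transition facts are pinned down, the global contradiction along $\rho$ is immediate.
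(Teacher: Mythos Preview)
Your approach is correct and essentially the same as the paper's: both track $f$-values along an augmenting path, use that matching edges preserve $f$-value, observe the endpoints force $f=1$ on the $A$-side and $f=0$ on the $B$-side, and then invoke part~(2) of Claim~\ref{clm1} to find a forbidden $(-,-)$ edge in $G_M$.

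One point to clean up: your detour through ``no edge of $\rho$ is labeled $(+,+)$'' is unnecessary and the appeal to popularity there is circular (Lemma~\ref{lem:popular} has not yet been proved at this point in the paper). You do not need to rule out $(+,+)$ edges at all --- Claim~\ref{clm1}(2) applies to \emph{every} edge of $G$, and its contrapositive says directly that any edge present in $G_M$ (i.e., any non-$(-,-)$ edge, whatever its label) cannot have $f(y)=1$ and $f(z)=0$. That, together with the endpoint values and the fact that matched pairs share an $f$-value, already forces the contradiction; the paper's proof uses exactly this and nothing more.
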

\begin{proof}
Let $a \in A$ and $b \in B$ be unmatched in~$M$. Then $f(a) = 1$ and $f(b) = 0$. 
If there is an augmenting path $\rho = \langle a, \cdots, b\rangle$ with respect to $M$ in $G_M$, then in $\rho$ we move from 
a man whose $f$-value is 1 to a woman whose $f$-value is 0. Thus there have to be two consecutive vertices $y \in A$ and $z \in B$ on 
$\rho$ such that $f(y) = 1$ and $f(z) = 0$. However part~(2) of Claim~\ref{clm1} tells us that such an edge $(y,z)$ has to be labeled~$(-,-)$. In other words, $G_M$ does not contain the edge $(y,z)$ or equivalently, there is no augmenting path $\rho$ in~$G_M$. \qed
\end{proof}

\begin{lemma}
\label{lem:popular}
$M$ is a popular matching in~$G$.
\end{lemma}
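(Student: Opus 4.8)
The plan is to verify the three conditions of Theorem~\ref{thm:pop-char} for $M$ in the subgraph $G_M$, using the $f$-values and Claim~\ref{clm1} as the main tools. Throughout I will exploit the fact that a $(+,+)$ edge $(a,b)$ forces $f(a)=0$ and $f(b)=1$ (part~(1) of Claim~\ref{clm1}), together with the complementary fact that an edge $(y,z)$ with $f(y)=1,f(z)=0$ is labeled $(-,-)$ and hence absent from $G_M$ (part~(2)). The guiding intuition is that in any $M$-alternating walk in $G_M$, crossing a non-matching edge from a man $y$ with $f(y)=1$ to a woman $z$ with $f(z)=0$ is forbidden, so along such a walk the pattern of $f$-values is constrained; in particular the $f$-value of a man can only ``increase'' (from $0$ to $1$) along a matching edge, while along a non-matching edge in $G_M$ it cannot drop from $1$ to $0$ on the man's side.

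First I would record the behavior of $f$ across a matching edge $(a,b)\in M$: by definition $f(a)=1$ iff $M'(a_1)\neq d(a)$ iff $(a_1,b)\in M'$ iff $M'(b)\in A_1$ iff $f(b)=1$. So $f(a)=f(b)$ for every matching edge. Combining this with part~(2) of Claim~\ref{clm1}: along an $M$-alternating path/cycle in $G_M$, whenever we traverse a non-matching edge $(y,z)$ from a man $y$ to a woman $z$, we cannot have $f(y)=1,f(z)=0$; and across the following matching edge the $f$-value is preserved. Hence, reading a path in the man$\to$woman$\to$man$\to\cdots$ direction along non-matching then matching edges, the $f$-value of the men encountered is non-decreasing. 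Consequently an alternating path cannot contain an edge whose man-endpoint has $f=1$ followed later by a man-endpoint with $f=0$.

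Now I would handle the three conditions. For condition~(i) (no alternating cycle in $G_M$ with a $(+,+)$ edge): a $(+,+)$ edge $(a,b)$ has $f(a)=0$; going around the cycle from $a$ the $f$-values of men are non-decreasing, but they must return to $f(a)=0$, so every man on the cycle has $f=0$ and hence (by the matching-edge equality) every woman on the cycle has $f=0$; but the woman-endpoint $b$ of the $(+,+)$ edge has $f(b)=1$, a contradiction. For condition~(ii) (no alternating path from an unmatched vertex with a $(+,+)$ edge): an unmatched man $a$ has $f(a)=1$ and an unmatched woman $b$ has $f(b)=0$; if the path starts at an unmatched man, then by monotonicity the man-endpoint of any later non-matching edge has $f=1$, so no $(+,+)$ edge (which needs man-endpoint $f=0$) can occur after the start — and I would dispose of a $(+,+)$ edge incident to the start itself by the same Claim~\ref{clm1} reasoning (an unmatched $a$ incident to a would-be $(+,+)$ edge would need $f(a)=0$, contradiction). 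If instead the path starts at an unmatched woman $z$ with $f(z)=0$, the first edge goes to a man $y$; if $f(y)=1$ this edge is $(-,-)$ and not in $G_M$, so $f(y)=0$, and then monotonicity keeps all subsequent men at $f\geq 0$ but forces a would-be $(+,+)$ edge to have $f=0$ on its man side and $f=1$ on its woman side, which I then rule out by tracking that the woman-side $f=1$ cannot be reconciled with the preceding man-side $f=0$ across a matching edge. For condition~(iii) (no alternating path with two $(+,+)$ edges): the man-endpoint of the first $(+,+)$ edge has $f=0$ and its woman-endpoint has $f=1$; crossing into the next matching edge preserves $f=1$ on the woman side, hence the next man has $f=1$, and by monotonicity all subsequent men have $f=1$, so a second $(+,+)$ edge (which needs man-endpoint $f=0$) is impossible.

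The main obstacle I anticipate is getting the orientation/parity bookkeeping exactly right: the conditions of Theorem~\ref{thm:pop-char} speak of undirected alternating paths, and ``the man-endpoint of an edge'' and ``non-decreasing $f$ along the path'' only make sense once one fixes a traversal direction, and one must check both directions (and the two possible parities for where matching versus non-matching edges sit, especially at the endpoints of an augmenting-type path). So the careful part is a clean case analysis on whether each endpoint is matched, and which of the two edges at each internal vertex is in $M$; everything else reduces to the two bullet facts (matching edges preserve $f$; $G_M$ edges never go $f{=}1$ man to $f{=}0$ woman) plus part~(1) of Claim~\ref{clm1}. I expect the argument to mirror closely the correctness proof of the maximum-size popular matching algorithm in \cite{Kav12-journal}, so I would also note that as the template to follow.
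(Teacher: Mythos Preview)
Your proposal is correct and follows essentially the same route as the paper: both arguments verify conditions~(i)--(iii) of Theorem~\ref{thm:pop-char} using exactly the two facts you isolate (matching edges preserve $f$, and part~(2) of Claim~\ref{clm1} forbids $G_M$-edges from an $f{=}1$ man to an $f{=}0$ woman) together with part~(1) of Claim~\ref{clm1}. Your ``monotonicity of men's $f$-values along the non-matching/matching direction'' is just a convenient repackaging of the paper's case analysis; once you fix the traversal direction carefully in each case (as you anticipate), the arguments coincide---in particular, for the unmatched-woman case of~(ii) you should traverse in the \emph{reverse} direction (so $f$ is non-increasing and stays at~$0$), which is what the paper does.
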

\begin{proof}
We will show that $M$ satisfies conditions~(i)-(iii) of Theorem~\ref{thm:pop-char}. 

\smallskip

\noindent{\em Condition~(i).} Consider any alternating cycle $C$ with respect to $M$ in $G_M$ and let $a$ be any vertex in $C$: if 
$f(a) = 0$ then its partner $b = M(a)$ also satisfies $f(b) = 0$ and part~(2) of Claim~\ref{clm1} tells us that there is no edge in 
$G_M$ between $b$ and any $a'$ such that $f(a') = 1$. Similarly, if $f(a) = 1$ then its partner $b = M(a)$ also satisfies $f(b) = 1$ and though
there can be an edge $(y,b)$ labeled $(+,+)$ incident on $b$, part~(1) of Claim~\ref{clm1} tells us that $f(y) = 0$ and thus there is no 
way the cycle $C$ can return to $a$, whose $f$-value is 1.
Hence if $G_M$ contains an alternating cycle $C$ with respect to $M$, then all vertices in $C$ have the same $f$-value. Since there can be no edge
labeled $(+,+)$ between 2 vertices whose $f$-value is the same (by part~(1) of Claim~\ref{clm1}), it follows that $C$ has no edge labeled $(+,+)$.

\smallskip

\noindent{\em Condition~(ii).} Consider any alternating path $p$ with respect to $M$ in $G_M$ and let the starting vertex in $p$ be~$a \in A$. Since $a$ is unmatched in $M$, we have $f(a) = 1$ and we know from part~(2) of Claim~\ref{clm1} that there is no edge in $G_M$ between such
a man and a woman whose $f$-value is 0. Thus $a$'s neighbor is $p$ is a woman $b'$ such that $f(b')=1$. Since $f(b') = 1$, its partner 
$a' = M(b')$ also satisfies $f(a') = 1$ and part~(2) of Claim~\ref{clm1} tells us that there is no edge in $G_M$ between $a'$ and any $b''$ such that 
$f(b'') = 0$, thus all vertices of $p$ have $f$-value 1 and thus there is no  edge labeled $(+,+)$ in~$p$. 

Suppose the  starting vertex in $p$ is~$b \in B$. Since $b$ is unmatched in $M$, we have $f(b) = 0$ and we again know from part~(2) of Claim~\ref{clm1} 
that there is no edge in $G_M$ between such a woman and a man whose $f$-value is 1. Thus $b$'s neighbor in $p$ is a woman $a'$ such that~$f(a')=0$. 
Since $f(a') = 0$, its partner 
$b' = M(a')$ also satisfies $f(b') = 0$ and part~(2) of Claim~\ref{clm1} tells us that there is no edge in $G_M$ between $b'$ and any $a''$ such that 
$f(a'') = 1$, thus all vertices of $p$ have $f$-value 0 and thus there is no edge labeled $(+,+)$ in~$p$. 

\smallskip

\noindent{\em Condition~(iii).} Consider any alternating path $\rho$ with respect to $M$ in~$G_M$. We can assume that the starting vertex 
in $\rho$ is matched in $M$ (as condition~(ii) has dealt with the case when this vertex is unmatched). Suppose the starting vertex is~$a \in A$. 
If $f(a) = 0$ then its partner $b = M(a)$ also satisfies $f(b) = 0$ and part~(2) of Claim~\ref{clm1} tells us that there is no edge in $G_M$ between $b$ 
and any $a'$ such that $f(a') = 1$, thus all vertices of $\rho$ have $f$-value 0 and thus there is no edge labeled $(+,+)$
in~$\rho$.  If $f(a) = 1$ then after traversing some vertices whose $f$-value is 1, we can encounter an edge $(y,z)$ that is labeled $(+,+)$ where
$f(z) = 1$ and $f(y) = 0$. However once we reach $y$, we get stuck in vertices whose $f$-value is 0 and thus we can see 
no more edges labeled $(+,+)$.

Suppose the starting vertex in $\rho$ is $b \in B$. If $f(b) = 1$ then its partner $a = M(b)$ also satisfies $f(a) = 1$ and part~(2) of 
Claim~\ref{clm1} tells us that there is no edge in $G_M$ between $a$ and any $b'$ such that $f(b') = 0$, thus all vertices of $\rho$ have $f$-value 
1 and thus there is no edge labeled $(+,+)$ in~$\rho$.  If $f(b) = 0$ then after traversing some vertices whose $f$-value is 0, we can encounter an 
edge $(y,z)$ labeled $(+,+)$ where $f(y) = 0$ and $f(z) = 1$. However once we reach $z$, we get stuck in vertices whose $f$-value is 1 and 
thus we can see no more edges labeled $(+,+)$. Thus in all cases there is at most one edge labeled $(+,+)$ in~$\rho$. \qed
\end{proof}

\subsection{$T$ is surjective}
\label{sec:onto}
We now show that corresponding to any dominant matching $M$ in $G$, there is a stable matching $M'$ in $G'$ such that $T(M') = M$.
Given a dominant matching $M$ in $G$, we first label each edge $e=(a,b)$ in $E \setminus M$ by the pair $(\alpha_e,\beta_e)$ where 
$\alpha_e$ is $a$'s vote for $b$ vs.\ $M(a)$ and $\beta_e$ is $b$'s vote for $a$ vs.~$M(b)$. We will work in $G_M$, the
subgraph of $G$ obtained by deleting all edges labeled $(-,-)$. 
We now construct sets $A_0,A_1 \subseteq A$ and $B_0, B_1 \subseteq B$ as described in the algorithm below. 
These sets will be useful in constructing the matching~$M'$.

\begin{enumerate}
\item[0.] Initialize $A_0 = B_1 = \emptyset$, \ $A_1 = \{$unmatched men in $M\}$, and $B_0 = \{$unmatched women in $M\}$.

\item[1.] For every edge $(y,z) \in M$ that is labeled $(+,+)$ do: 
\begin{itemize}
\item let $A_0 = A_0 \cup \{y\}$, \ $B_0 = B_0 \cup \{M(y)\}$, \ $B_1 = B_1 \cup \{z\}$, \ and $A_1 = A_1 \cup \{M(z)\}$.
\end{itemize}
\item[2.] While there exists a matched man $a \notin A_0$ that is adjacent in $G_M$ to a woman in $B_0$ do: 
\begin{itemize}
\item $A_0 = A_0 \cup \{a\}$ and $B_0 = B_0 \cup \{M(a)\}$.
\end{itemize}
\item[3.] While there exists a matched woman $b \notin B_1$ that is adjacent in $G_M$ to a man in $A_1$ do: 
\begin{itemize}
\item $B_1 = B_1 \cup \{b\}$ and $A_1 = A_1 \cup \{M(b)\}$.
\end{itemize}
\end{enumerate}

All unmatched men are in $A_1$ and all unmatched women are in~$B_0$. For every edge $(y,z)$ that is labeled $(+,+)$, we add
$y$ and its partner to $A_0$ and $B_0$, respectively while $z$ and its partner are added to $B_1$ and $A_1$, respectively.
For any man $a$, if $a$ is adjacent to a vertex in $B_0$ and $a$ is not in $A_0$, then $a$ and its partner get added to
$A_0$ and $B_0$, respectively. Similarly, for any woman $b$, if $b$ is adjacent to a vertex in $A_1$ and $b$ is not in $B_1$, 
then $b$ and its partner get added to $B_1$ and $A_1$, respectively.

The following observations are easy to see (refer to Fig.~\ref{fig:lem4}). Every $a \in A_1$ has an even length alternating path in $G_M$ to either 
\begin{itemize}
\item[(1)] a man unmatched in $M$ (by Step~0 and Step~3) or 
\item[(2)] a man $M(z)$ where $z$ has an edge labeled $(+,+)$ incident on it (by Step~1 and Step~3). 
\end{itemize}

Similarly, every $a \in A_0$ has an odd length alternating path in $G_M$ to either 
\begin{itemize}
\item[(3)] a woman unmatched in $M$ (by Step~0 and Step~2) or 
\item[(4)] a woman $M(y)$ where $y$ has an edge labeled $(+,+)$ incident on it (by Step~1 and Step~2). 
\end{itemize}

\begin{figure}[h]
	\vspace*{-5mm}
\centerline{\resizebox{0.735\textwidth}{!}{\input{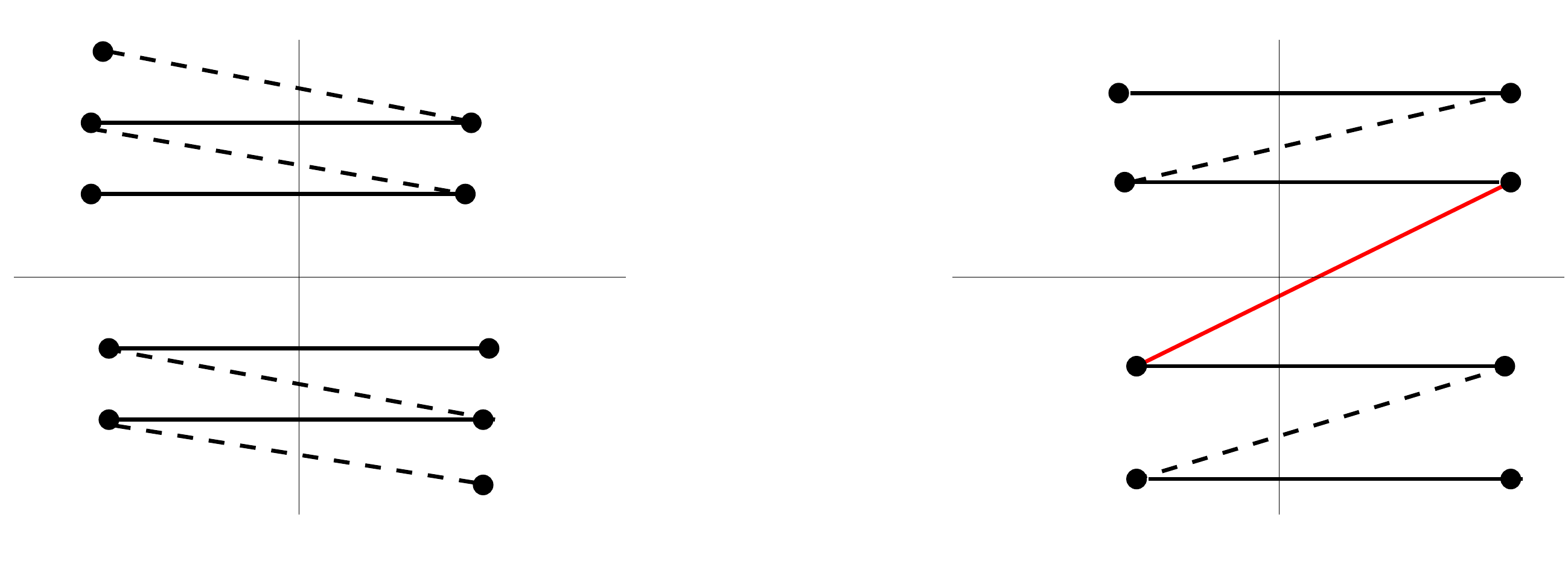_t}}}
\caption{Vertices get added to $A_1$ and $A_0$ by alternating paths in $G_M$ from either unmatched vertices or endpoints of edges labeled~$(+,+)$. The bold black edges are in $M$ and the red edge $(y,z)$ is labeled $(+,+)$ with respect to~$M$.}
\label{fig:lem4}
\vspace*{-5mm}
\end{figure}

We show the following lemma here and its proof is based on the characterization of dominant matchings in
terms of conditions~(i)-(iv) as given by Corollary~\ref{cor0}. We will also use (1)-(4) observed above in our proof.
\begin{lemma}
\label{clm3}
$A_0 \cap A_1 = \emptyset$.
\end{lemma}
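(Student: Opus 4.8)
The plan is to argue by contradiction: suppose some vertex $a \in A$ lies in $A_0 \cap A_1$. Then by observations (1)--(2), there is an even-length alternating path $P_1$ in $G_M$ from $a$ to either an unmatched man or to a man $M(z)$ where $z$ carries a $(+,+)$ edge; and by observations (3)--(4), there is an odd-length alternating path $P_0$ in $G_M$ from $a$ to either an unmatched woman or to a woman $M(y)$ where $y$ carries a $(+,+)$ edge. Since $a$ is matched (every unmatched man starts out only in $A_1$, and $a$ is in $A_0$ too, so $a$ is matched), these two paths together with, where relevant, the $(+,+)$ edges at their far ends, can be concatenated into a single alternating walk in $G_M$. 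The first step is to extract from this walk a genuine alternating \emph{path} in $G_M$ that reaches a contradiction with one of conditions~(i)--(iv) of Corollary~\ref{cor0}.

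Concretely, I would split into cases according to the endpoints of $P_0$ and $P_1$. If $P_1$ ends at an unmatched man and $P_0$ ends at an unmatched woman, then $P_0 \cup P_1$ contains an augmenting path with respect to $M$ in $G_M$ (after reducing the walk to a simple path), contradicting condition~(iv). If $P_1$ ends at an unmatched man and $P_0$ ends at $M(y)$ with $(y,y')$ labeled $(+,+)$ for $y' = M(y)$ hanging off, then appending that $(+,+)$ edge gives an alternating path from an unmatched vertex containing a $(+,+)$ edge, contradicting condition~(ii); symmetrically when $P_0$ ends at an unmatched woman and $P_1$ at $M(z)$ with a $(+,+)$ edge at $z$. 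Finally, if $P_0$ ends at $M(y)$ with the $(+,+)$ edge $(y, M(y))$ and $P_1$ ends at $M(z)$ with the $(+,+)$ edge $(z, M(z))$ (and $y \neq z$, else we'd be in an earlier case), then extending both ends by these $(+,+)$ edges yields an alternating path containing two $(+,+)$ edges, contradicting condition~(iii). In every case the existence of $a \in A_0 \cap A_1$ is impossible, so $A_0 \cap A_1 = \emptyset$.

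The main obstacle is the bookkeeping needed to turn the \emph{walk} $P_0 \cup P_1$ (which may revisit vertices, since $P_0$ and $P_1$ emanate from the common vertex $a$ and could share further structure) into a \emph{simple} alternating path of the required type. The key point to nail down is that the two paths are built from $M$-edges and $G_M$-edges in a consistent alternating pattern through the common vertex $a$ --- $P_1$ leaves $a$ along a non-matching edge (going into $B_1 \ni M(a)$? no: $a \in A_1$ means $a$ reaches an unmatched man or $M(z)$ by an even path, so $P_1$ starts $a, M(a), \dots$, i.e.\ with the matching edge at $a$), while $P_0$ leaves $a$ along the matching edge as well... so in fact I must check carefully which edge at $a$ each path uses. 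The correct observation is that $P_1$ (even length, reaching a man) starts with the $M$-edge at $a$, and $P_0$ (odd length, reaching a woman) \emph{also} must start with the $M$-edge at $a$ unless $a$ is unmatched; since $a$ is matched, both would start $a \to M(a)$, and the resolution is that $a$'s membership in both sets forces $M(a) \in B_0 \cap B_1$, which by part~(2) of Claim~\ref{clm1} (with $f(M(a))$ forced to both $0$ and $1$) is already contradictory --- or else the paths genuinely diverge and we get the path-concatenation contradictions above. I would use Claim~\ref{clm1} to show $f$ is well-defined and constant along $M$-edges inside each of $A_0 \cup B_0$ and $A_1 \cup B_1$, so that $A_0 \cup B_0$ sits inside $\{f = 0\}$-territory and $A_1 \cup B_1$ inside $\{f = 1\}$-territory (with the $(+,+)$ seed edges being exactly the $f(y)=0,f(z)=1$ crossings of part~(1)); disjointness of $A_0$ and $A_1$ then follows because a vertex cannot simultaneously be reachable in $\{f=0\}$-land and $\{f=1\}$-land once one verifies, via part~(2) of Claim~\ref{clm1}, that $G_M$ has no edge crossing from an $f=1$ man to an $f=0$ woman. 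I expect the cleanest writeup to combine this $f$-value viewpoint with the path-concatenation argument, using the former to handle the ``paths share $a$'s matching edge'' case and the latter for the ``paths diverge'' case.
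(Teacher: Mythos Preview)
Your four-case split and the contradictions you derive in each case are exactly the paper's argument. The problem lies in your third paragraph. First, the claim that $P_0$ ``also must start with the $M$-edge at $a$'' is wrong in the generic case: if $a$ entered $A_0$ via Step~2, then by construction $a$ has a $G_M$-edge to some $b \in B_0$ that was already in $B_0$ before $a$ joined $A_0$, so $b \neq M(a)$ and $P_0$ leaves $a$ along this non-matching edge. Meanwhile every matched $a \in A_1$ satisfies $M(a) \in B_1$, so $P_1$ leaves $a$ along $(a,M(a))$. Hence the two paths alternate correctly at $a$ and concatenate directly; the only sub-case needing special handling is when $a$ itself is the endpoint $y$ of a $(+,+)$ edge added in Step~1, and there one simply uses that $(+,+)$ edge in place of $P_0$. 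The paper glosses over these details but they are routine once you trace the algorithm rather than reason abstractly about path parities.

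Second, and more seriously, your fallback via Claim~\ref{clm1} and $f$-values is circular. Claim~\ref{clm1} is stated and proved in the \emph{forward} direction of the correspondence: one starts from a given stable matching $M'$ in $G'$ and defines $f$ in terms of which of $a_0,a_1$ is matched to $d(a)$ in $M'$. In Section~\ref{sec:onto} we are going the other way, building $M'$ out of the dominant matching $M$, and Lemma~\ref{clm3} is precisely the lemma needed before $M'$ can even be defined (it guarantees that each $a$ is placed in at most one of $A_0,A_1$, so that the subsequent case split in the construction of $M'$ is unambiguous). You therefore cannot appeal to $f$-values derived from $M'$ to prove a prerequisite for $M'$'s construction. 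Drop the $f$-value detour; the direct path-tracing argument already suffices.
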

\begin{proof}
\noindent{\em Case~1.} Suppose $a$ satisfies reasons~(1) and (3) for its inclusion in $A_1$ and in $A_0$, respectively. 
So $a$ is in $A_1$ because it is reachable via an even alternating path in $G_M$ from 
an unmatched man $u$; also $a$ is in $A_0$ because it is reachable via an odd length alternating path in $G_M$ from an unmatched 
woman~$v$. Then there is an augmenting path $\langle u,\ldots,v\rangle$ wrt $M$ in $G_M$ -- a contradiction to the fact that $M$
is dominant (by Lemma~\ref{thm:domn-char}).

\smallskip

\noindent{\em Case~2.} Suppose $a$ satisfies reasons~(1) and (4) for its inclusion in $A_1$ and in $A_0$, respectively. 
So $a$ is in $A_1$ because it is reachable via an even alternating path wrt $M$ in $G_M$ from 
an unmatched man $u$; also $a$ is in $A_0$ because it is reachable via an odd length alternating path in $G_M$ from $z$, where edge 
$(y,z)$ is labeled~$(+,+)$. Then there is  an alternating path wrt $M$ in $G_M$ from an unmatched man $u$ to the edge $(y,z)$
labeled $(+,+)$ and this is a contradiction to condition~(ii) of popularity.

\smallskip

\noindent{\em Case~3.} Suppose $a$ satisfies reasons~(2) and (3) for its inclusion in $A_1$ and in $A_0$, respectively. 
This case is absolutely similar to Case~2. This will cause an alternating path wrt $M$ in $G_M$ from an unmatched woman to an edge
labeled $(+,+)$, a contradiction again to condition~(ii) of popularity.

\smallskip

\noindent{\em Case~4.} Suppose $a$ satisfies reasons~(2) and (4) for its inclusion in $A_1$ and in $A_0$, respectively. 
So $a$ is reachable via an even length alternating path in $G_M$ from an edge labeled $(+,+)$ and 
$M(a)$ is also reachable via an even length alternating path in $G_M$ from an edge  labeled~$(+,+)$.
If it is the same edge labeled $(+,+)$ that both $a$ and $M(a)$ are 
reachable from, then there is an alternating cycle in $G_M$ with a $(+,+)$ edge -- a contradiction to condition~(i) of popularity. 
If these are 2 different edges 
labeled $(+,+)$, then we have an alternating path in $G_M$ with two edges labeled $(+,+)$ -- a contradiction to condition~(iii) of popularity.

These four cases finish the proof that $A_0\cap A_1 = \emptyset$. \qed
\end{proof}

We now describe the construction of the matching $M'$. Initially $M' = \emptyset$. 
\begin{itemize}
\item For each $a \in A_0$: add the edges $(a_0,M(a))$ and $(a_1,d(a))$ to~$M'$.
\item For each $a \in A_1$: add the edge $(a_0,d(a))$ to $M'$ and if $a$ is matched in $M$ then add $(a_1,M(a))$ to~$M'$.
\item For $a \notin (A_0\cup A_1)$: add the edges $(a_0,M(a))$ and $(a_1,d(a))$ to~$M'$.

{\em (Note that the men outside $A_0 \cup A_1$ are not reachable from either unmatched vertices or edges labeled $(+,+)$ via alternating paths in~$G_M$.)}
\end{itemize}

\begin{lemma}
\label{lem:stable}
$M'$ is a stable matching in~$G'$.
\end{lemma}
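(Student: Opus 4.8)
\textbf{Proof plan for Lemma~\ref{lem:stable}.}
The plan is to verify directly that the matching $M'$ constructed above is stable in $G'$, i.e., that no edge of $E'$ blocks it. First I would record two basic facts about $M'$. (a)~Every vertex of the form $a_0$ or $a_1$ is matched in $M'$: for $a \in A_0$ and for $a \notin A_0 \cup A_1$ both $a_0,a_1$ are matched, and for $a \in A_1$ the vertex $a_0$ is matched to $d(a)$, while $a_1$ is matched to $M(a)$ unless $a$ is unmatched in $M$ (in which case $a_1$ is free). In particular, every dummy woman $d(a)$ is matched, and $T(M') = M$ holds by construction. (b)~For $b \in B$, $b$ is matched in $M'$ to some $a_i$ precisely when $b$ is matched in $M$ to $a$, and the level $i$ is governed by the set membership of $a$: $i=0$ if $a \in A_0 \cup (A\setminus(A_0\cup A_1))$ and $i=1$ if $a \in A_1$. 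Thus an unmatched woman in $M$ is unmatched in $M'$, but this is harmless since unmatched women lie in $B_0$ and we will show their potential blocking partners are all level~$1$ men that strictly prefer their $M'$-partners.

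Next I would check the edges incident to dummy women, which are the easiest: the only neighbours of $d(a)$ are $a_0$ and $a_1$, with $d(a)$ preferring $a_0$. If $(a_0,d(a)) \notin M'$ then $a \in A_0 \cup (A\setminus(A_0\cup A_1))$, so $a_0$ is matched to $M(a)$, which $a_0$ prefers to its last choice $d(a)$; hence $(a_0,d(a))$ does not block. The pair $(a_1,d(a))$: $d(a)$ is $a_1$'s top choice, so it blocks only if $a_1$ is unmatched or matched worse, i.e.\ only if $M'(a_1)=M(a)$ with $a\in A_1$ — but then $d(a)$ already has its top choice $a_0$, so no block. So no dummy edge blocks.

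The substantive case is an edge $(a_i,b)$ with $b \in B$. Suppose it blocks $M'$. From $b$'s side, $b$ must be either unmatched in $M'$ or matched to someone it ranks below $a_i$ in $G'$; since $b$ ranks all level~$1$ neighbours above all level~$0$ neighbours, and within each level in the $G$-order, a block forces (roughly) that $a_i$ is a level~$1$ vertex with $b$ unmatched or matched at level~$0$ or at level~$1$ to a $G$-worse man, or that $a_i$ is level~$0$ with $b$ matched at level~$0$ to a $G$-worse man. I would translate each such configuration back to $G$ using the definitions of $A_0,A_1,B_0,B_1$ and the edge label $(\alpha_{(a,b)},\beta_{(a,b)})$ in $G_M$ (note $(a,b)\notin M$ since $b$'s $M'$-partner is not $a_i$, or is a different copy). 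The key structural inputs are: (i)~if $(a,b)$ is labeled $(+,+)$ then by Step~1 of the algorithm $a\in A_0$ and $b\in B_1$, contradicting $a_i$ being a level-$1$ man matched elsewhere / $b$ being reachable as claimed; (ii)~the closure Steps~2 and 3 — if $a_i$ is a level-$1$ man ($a\in A_1$) adjacent in $G_M$ to $b$, Step~3 would have put $b\in B_1$, so $b$ is matched at level~$1$ in $M'$; if $a\in A_0$ adjacent in $G_M$ to $b\in B_0$, Step~2 contradicts $a\notin A_0$; and if the edge is labeled $(-,-)$ it is absent from $G_M$ but then $a$ prefers $M(a)$ to $b$, killing the block from $a$'s side. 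Running through the constant number of level/side combinations, each blocking configuration either violates stability-of-$M$-type reasoning (a $(+,+)$ label) or the saturation of Steps~2--3 or the labeling of $(a,b)$, giving a contradiction. I expect the bookkeeping of these cases — keeping straight which men are at level $0$ versus $1$, and matching up the $G'$-preference order (level~$1$ block~level~$0$) with the $G_M$-adjacency used by Steps~2 and 3 — to be the main obstacle; the underlying idea in each case is short, but there are several sub-cases and one must be careful that an unmatched-in-$M'$ woman $b\in B_0$ cannot be blocked by a level-$1$ man (else Step~3 would have pulled $b$ into $B_1$). Once all cases are dispatched, $M'$ has no blocking pair and is therefore stable in $G'$. \qed
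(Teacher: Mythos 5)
Your plan is essentially the paper's own argument: a direct check that no edge of $G'$ blocks $M'$, dispatching dummy-woman edges trivially and handling the real edges via exactly the facts you list --- every $(+,+)$ edge lies in $A_0\times B_1$, Steps~2--3 are run to saturation, and $A_0\cap A_1=\emptyset$ (Lemma~\ref{clm3}). The paper closes the case analysis you leave as ``bookkeeping'' more economically, by observing that a blocking pair with both endpoints non-dummy forces the man to be at level~1 and the woman matched at level~0, so the offending edge of $G_M$ joins $A_1$ to $B_0$ and Step~2 plus Lemma~\ref{clm3} give the contradiction; your ingredients suffice for this, though your side remark that an unmatched woman's only potential blockers are level-1 men is slightly off ($a_0$ with $a\in A_1$ ranks $d(a)$ last and so would also covet her), a subcase that is likewise killed by Step~2.
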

\begin{proof}
Suppose $M'$ is not stable in~$G'$. Then there are edges $(u_i,v)$ and $(a_j,b)$ in $M'$ where $i,j \in\{0,1\}$,
such that in the graph $G'$, the vertices $v$ and $a_j$ prefer each other to $u_i$ and $b$, respectively. There cannot be a blocking 
pair involving a dummy woman, thus the edges $(u,v)$ and $(a,b)$ are in~$M$. 

If $i = j$, then the pair $(a,v)$ blocks $M$ in~$G$. However, from the construction of the sets $A_0,A_1,B_0,B_1$,
we know that all the blocking pairs with respect to $M$ are in $A_0 \times B_1$.
Thus there is no blocking pair in $A_0 \times B_0$ or in $A_1 \times B_1$ with respect to $M$ and so~$i \ne j$. Since $v$ prefers $a_j$ 
to $u_i$ in $G'$, the only possibility is  $i=0$ and~$j=1$. It has to be the case that $a$ prefers $v$ to $b$, 
so there is an edge labeled $(+,-)$ between $a \in A_1$ and $v \in B_0$ (see Fig.~\ref{fig:lem5}). 

\begin{figure}[h]
\centerline{\resizebox{0.3\textwidth}{!}{\input{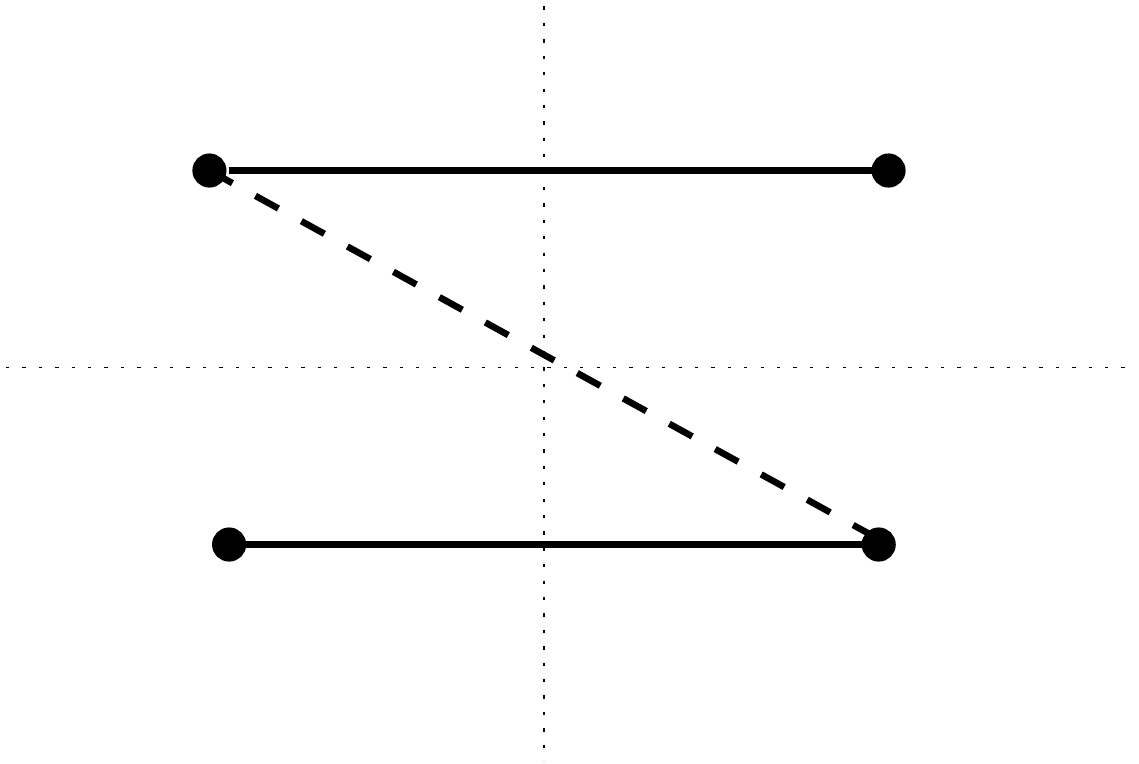_t}}}
\caption{If the vertex $a_1$ prefers $v$ to $b$ in $G'$, then $a$ prefers $v$ to $b$ in $G$; thus the edge $(a,v)$ has to be present in~$G_M$.}
\label{fig:lem5}
\end{figure}

So once $v$ got added to $B_0$, since $a$ is adjacent in $G_M$ to a vertex in $B_0$,
vertex $a$ satisfied Step~2 of our algorithm to construct the sets $A_0,A_1,B_0$, and~$B_1$. 
So  $a$ would have got added to $A_0$ as well, i.e., $a \in A_0 \cap A_1$, a contradiction to Lemma~\ref{clm3}.
Thus there is no blocking pair with respect to $M'$ in~$G'$. \qed
\end{proof}

For each $a \in A$, note that exactly one of $(a_0,d(a))$, $(a_1,d(a))$ is in~$M'$. 
In order to form the set $T(M')$, the edges of $M'$ with women in $d(A)$ are pruned and each edge 
$(a_i,b) \in M'$, where $b \in B$ and $i \in \{0,1\}$, is replaced by~$(a,b)$. It is easy to see that $T(M') = M$. 

This concludes the proof that 
every dominant matching in $G$ can be realized as an image under $T$ of some stable matching in~$G'$. Thus $T$ is surjective.

Our mapping $T$ can also be used to solve the min-cost dominant matching problem in polynomial time. Here  we are given a cost 
function $c: E \rightarrow Q$ and the problem is to find a dominant matching in $G$ whose sum of edge costs is the least.  
We will use the mapping $T$ established from $\{$stable matchings in $G'\}$ to $\{$dominant matchings in $G\}$
to solve the min-cost dominant matching problem in~$G$. It is easy to extend $c$ to the edge set of~$G'$. For each edge $(a,b)$ in $G$,
we will assign $c(a_0,b) = c(a_1,b) = c(a,b)$ and we will set $c(a_0,d(a)) = c(a_1,d(a)) = 0$. Thus the cost of any stable matching $M'$ in $G'$ is the same is the cost of the dominant matching $T(M')$ in~$G$. 

Since every dominant matching $M$ in $G$ equals $T(M')$ for some stable matching $M'$ in $G'$, it follows 
that the min-cost dominant matching problem in $G$ is the same as the min-cost stable matching problem in~$G'$. Since a
min-cost stable matching in $G'$ can be computed in polynomial time, we can conclude Theorem~\ref{thm:min-cost-dom}. 
\begin{theorem}
\label{thm:min-cost-dom}
Given a graph $G = (A \cup B,E)$ with strict preference lists and a cost function $c: E \rightarrow \mathbb{Q}$, 
the problem of computing a min-cost dominant matching can be solved in polynomial time.
\end{theorem}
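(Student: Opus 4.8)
The plan is to reduce the min-cost dominant matching problem in $G$ to the min-cost stable matching problem in $G'$, and then invoke the known polynomial-time algorithm for the latter. The surjective mapping $T: \{\text{stable matchings in } G'\} \to \{\text{dominant matchings in } G\}$ is already established, so the only new ingredient needed is to extend the cost function $c: E \to \mathbb{Q}$ to the edge set $E'$ of $G'$ in such a way that the cost is preserved under $T$. The natural extension is $c(a_0,b) = c(a_1,b) = c(a,b)$ for every edge $(a,b) \in E$, and $c(a_0,d(a)) = c(a_1,d(a)) = 0$ for every dummy edge. I would first record this extension explicitly and then observe that, for any stable matching $M'$ in $G'$, the matching $T(M')$ is obtained from $M'$ by deleting the zero-cost dummy edges and relabeling each surviving edge $(a_i,b)$ as $(a,b)$ at the same cost; hence $c(T(M')) = c(M')$.

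The next step is to push this identity through the surjectivity of $T$. Since every dominant matching $M$ in $G$ is $T(M')$ for some stable matching $M'$ in $G'$ (by the surjectivity result of Section~\ref{sec:onto}), and conversely $T$ maps every stable matching of $G'$ to a dominant matching of $G$, the two optimization problems have the same optimal value: $\min\{c(M) : M \text{ dominant in } G\} = \min\{c(M') : M' \text{ stable in } G'\}$. Moreover, given an optimal stable matching $M'$ in $G'$, the matching $T(M')$ is an optimal dominant matching in $G$, and it is computable from $M'$ in linear time. So an algorithm is: build $G'$ (linear in the size of $G$), extend $c$ to $E'$, compute a min-cost stable matching in $G'$ via linear programming over the stable matching polytope (polynomial time, citing \cite{Rot92}), and apply $T$.

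The only real point requiring care — the ``main obstacle'', though a mild one — is making sure the cost-preservation identity is airtight: one must check that $M'$ always matches each dummy woman $d(a)$ (which it does, since $d(a)$ is $a_1$'s top choice, as already argued when $T$ was shown to be well-defined), so that exactly one of the two zero-cost dummy edges at $a$ lies in $M'$, and that the at most one non-dummy edge at $\{a_0,a_1\}$ contributes cost exactly $c(a, \cdot)$. Both facts are immediate from the structure of $G'$ already laid out, so the proof is short: state the cost extension, note $c(M') = c(T(M'))$, combine with surjectivity of $T$ to equate the two problems, and conclude by the polynomial-time solvability of min-cost stable matching.

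\begin{proof}
We extend $c$ to the edges of $G'$ as follows: for every edge $(a,b)$ of $G$ with $b \in B$, set $c(a_0,b) = c(a_1,b) = c(a,b)$, and for every $a \in A$ set $c(a_0,d(a)) = c(a_1,d(a)) = 0$. Let $M'$ be any stable matching in $G'$. As noted when we showed $T$ is well-defined, $M'$ matches $d(a)$ for every $a \in A$, so exactly one of $(a_0,d(a)), (a_1,d(a))$ belongs to $M'$, and it contributes $0$ to $c(M')$; the remaining edges of $M'$ are of the form $(a_i,b)$ with $b \in B$, and $T$ replaces each such edge by $(a,b)$ at the same cost $c(a,b)$. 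Hence $c(M') = c(T(M'))$.

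By Lemma~\ref{lem:stable} and the fact (Section~\ref{sec:onto}) that $T$ is surjective, $T$ maps the set of stable matchings of $G'$ onto the set of dominant matchings of $G$, preserving cost. Therefore
\[
\min\{\, c(M) : M \text{ a dominant matching in } G \,\} \;=\; \min\{\, c(M') : M' \text{ a stable matching in } G' \,\},
\]
and if $M'$ attains the minimum on the right, then $T(M')$ attains the minimum on the left. The graph $G'$ has size linear in that of $G$ and can be constructed in linear time, and $T(M')$ can be computed from $M'$ in linear time. Since a min-cost stable matching in $G'$ can be computed in polynomial time (e.g.\ by optimizing the linear cost function over the stable matching polytope \cite{Rot92}), it follows that a min-cost dominant matching in $G$ can be computed in polynomial time. \qed
\end{proof}
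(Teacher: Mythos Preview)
Your proposal is correct and follows essentially the same approach as the paper: extend $c$ to $E'$ by $c(a_i,b)=c(a,b)$ and $c(a_i,d(a))=0$, observe that $c(M')=c(T(M'))$ for every stable matching $M'$ in $G'$, use the surjectivity of $T$ to equate the two optimization problems, and invoke the polynomial-time min-cost stable matching algorithm. The paper's argument is slightly terser but identical in content.
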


\section{The popular edge problem}
\label{sec:pop-edge}
In this section we show a decomposition for any popular matching in terms of a stable matching and a dominant matching.
We use this result to design a linear time algorithm for the {\em popular edge} problem. 
Here we are given an edge $e^*=(u,v)$ in $G = (A\cup B,E)$ and we would like to know if there exists a popular matching in $G$ that contains~$e^*$.
We claim the following algorithm solves the above problem.

\begin{enumerate}
\item Check if there is a stable matching $M_{e^*}$ in $G$ that contains edge~$e^*$. If so, then return~$M_{e^*}$.
\item Check if there is a dominant matching $M'_{e^*}$ in $G$ that contains edge~$e^*$. If so, then return~$M'_{e^*}$.
\item Return ``there is no popular matching that contains edge $e^*$ in~$G$''.
\end{enumerate}

\paragraph{Running time of the above algorithm.} In step~1 of our algorithm, we have to determine if there exists a stable matching $M_{e^*}$ in $G$ that 
contains $e^* = (u,v)$. We modify the Gale-Shapley algorithm so that the woman $v$ rejects all proposals from anyone worse than~$u$. 
If the modified Gale-Shapley algorithm produces a matching $M$ containing $e^*$, then it will be the most men-optimal stable matching 
containing $e^*$ in~$G$. Else there is no stable matching in $G$ that contains~$e^*$. We refer the reader to~\cite[Section 2.2.2]{GI89} for
the correctness of the modified Gale-Shapley algorithm; it is based on the following fact:
\begin{itemize}
\item If $G$ admits a stable matching that contains $e^*=(u,v)$, then exactly one of (1), (2), (3) occurs in any stable matching $M$ of $G$: 
{\em (1)~$e^* \in M$, (2)~$v$ is matched to a neighbor better than $u$, (3)~$u$ is matched to a neighbor better than $v$}.
\end{itemize}

In step~2 of our algorithm for the popular edge problem, we have to determine if there exists a dominant matching in $G$
that contains $e^* = (u,v)$. This is equivalent to checking if there exists a stable matching in $G'$ that contains either the edge
$(u_0,v)$ or the edge~$(u_1,v)$. This can be determined by using the same modified Gale-Shapley algorithm as given in the previous paragraph.
Thus both steps~1 and 2 of our algorithm can be implemented in $O(m)$ time. 

\paragraph{Correctness of the algorithm.}
Let $M$ be a popular matching in $G$ that contains edge~$e^*$. We will use $M$ to 
show that there is either a stable matching or a dominant matching that contains~$e^*$. 
As before, label each edge $e = (a,b)$ outside $M$ by the pair of votes $(\alpha_e,\beta_e)$, where
$\alpha_e$ is $a$'s vote for $b$ vs.\ $M(a)$ and $\beta_e$ is $b$'s vote for $a$ vs.~$M(b)$. 

We run the following algorithm now -- this is similar to the algorithm in Section~\ref{sec:onto} to build the subsets $A_0,A_1$ of $A$ and
$B_0,B_1$ of $B$, except that all the sets $A_0,A_1,B_0,B_1$ are initialized to empty sets here.
\begin{enumerate}
\item[0.] Initialize $A_0 = A_1 = B_0 =  B_1 = \emptyset$.
\item[1.] For every edge $(a,b) \in M$ that is labeled $(+,+)$: 
\begin{itemize}
\item let $A_0 = A_0 \cup \{a\}$, \ $B_1 = B_1 \cup \{b\}$, \ $A_1 = A_1 \cup \{M(b)\}$, \ and $B_0 = B_0 \cup \{M(a)\}$.
\end{itemize}
\item[2.] While there exists a man $a' \notin A_0$ that is adjacent in $G_M$ to a woman in $B_0$ do: 
\begin{itemize}
\item $A_0 = A_0 \cup \{a'\}$ and $B_0 = B_0 \cup \{M(a')\}$.
\end{itemize}
\item[3.] While there exists a woman $b' \notin B_1$ that is adjacent in $G_M$ to a man in $A_1$ do: 
\begin{itemize}
\item $B_1 = B_1 \cup \{b'\}$ and $A_1 = A_1 \cup \{M(b)\}$.
\end{itemize}
\end{enumerate}
All vertices added to the sets $A_0$ and $B_1$ are matched in $M$ -- otherwise there would be an alternating path from an
unmatched vertex to an edge labeled $(+,+)$ and this contradicts condition~(ii) of popularity of $M$ (see Theorem~\ref{thm:pop-char}).
Note that every vertex in $A_1$ is reachable via an even length alternating path wrt $M$ in $G_M$ from some man $M(b)$ 
whose partner $b$ has an edge labeled $(+,+)$ incident on it. Similarly, every vertex in $A_0$ is reachable via an odd length 
alternating path wrt $M$ in $G_M$ from some woman $M(a)$ whose partner $a$ has an edge labeled $(+,+)$ incident on it.
The proof of Case~4 of Lemma~\ref{clm3} shows that $A_0 \cap A_1 = \emptyset$.

We have $B_1 = M(A_1)$ and $B_0 = M(A_0)$ (see Fig.~\ref{fig:first}).  All edges labeled $(+,+)$ are in 
$A_0 \times B_1$ (from our algorithm) and all edges in $A_1 \times B_0$ have to be labeled $(-,-)$ (otherwise we would contradict either
condition~(i) or (iii) of popularity of $M$).

Let $A' = A_0 \cup A_1$ and $B' = B_0 \cup B_1$. Let $M_0$ be the matching $M$ restricted to $A' \cup B'$.  
The matching $M_0$ is popular on $A' \cup B'$. Suppose not and there is a matching $N_0$ on $A' \cup B'$ that is more popular.
Then the matching $N_0 \cup (M\setminus M_0)$ is more popular than $M$, a contradiction to the popularity of~$M$.
Since $M_0$ matches all vertices in $A' \cup B'$, it follows that $M_0$ is dominant on $A' \cup B'$.

\begin{figure}[h]
	\vspace*{-5mm}
\centerline{\resizebox{0.72\textwidth}{!}{\input{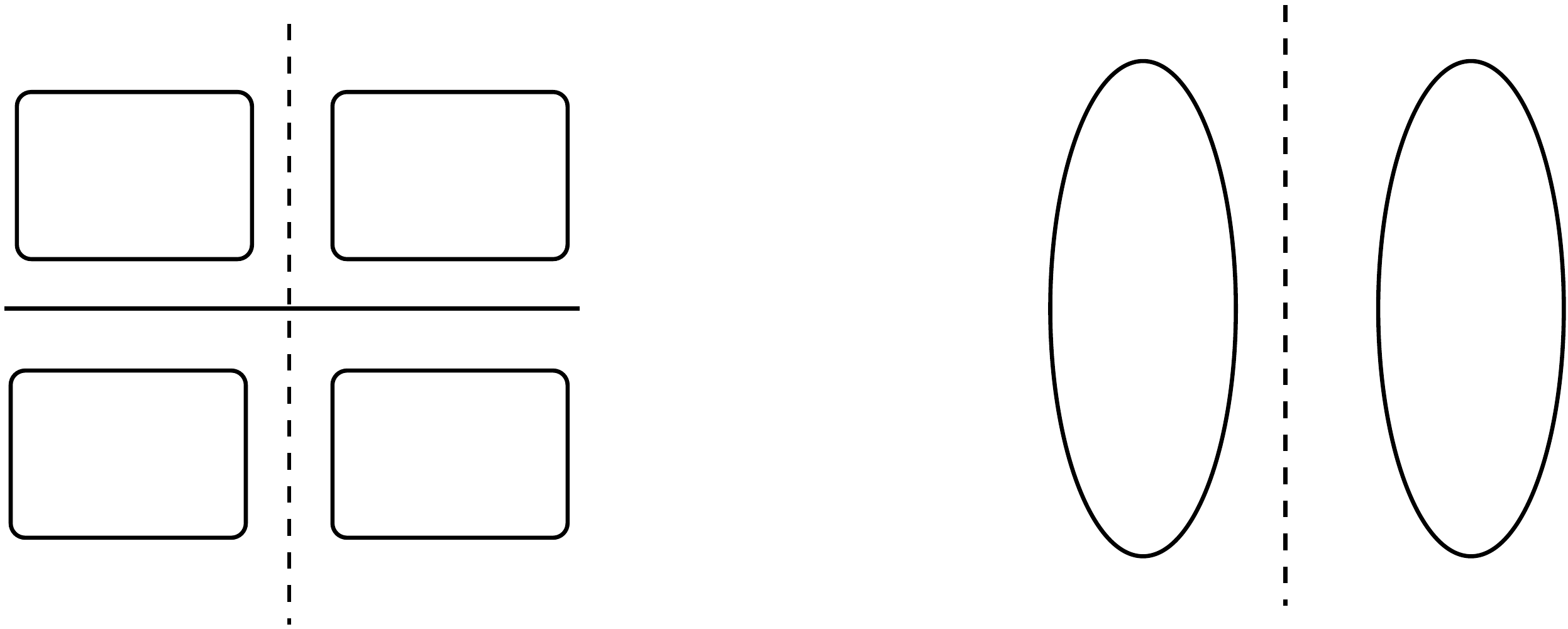_t}}}
\caption{$M_0$ is the matching $M$ restricted to $A' \cup B'$. All unmatched vertices are in $(A \setminus A') \cup (B \setminus B')$.}
\label{fig:first}
\vspace*{-5mm}
\end{figure}
Let $M_1 = M \setminus M_0$ and let $Y = A\setminus A'$ and $Z = B \setminus B'$. 
The matching $M_1$ is stable on $Y \cup Z$ as there is no edge labeled $(+,+)$ in
$Y \times Z$ (all such edges are in $A_0 \times B_1$ by Step~1 of our algorithm above). 

The subgraph $G_M$ contains no edge in $A_1 \times Z$ --
otherwise such a woman $z \in Z$ should have been in $B_1$ (by Step~3 of the algorithm above) 
and similarly, $G_M$ contains no edge in $Y \times B_0$ -- otherwise such a man $y \in Y$ should 
have been in $A_0$ (by Step~2 of this algorithm). We will now show Lemmas~\ref{lem:domn-e} and \ref{lem:stab-e}. 
These lemmas prove the correctness of our algorithm.


\begin{lemma}
\label{lem:domn-e}
If the edge $e^* \in M_0$ then there exists a dominant matching in $G$ that contains~$e^*$.
\end{lemma}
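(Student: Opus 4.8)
\textbf{Proof proposal for Lemma~\ref{lem:domn-e}.}
The plan is to start from the popular matching $M$ (which contains $e^*$ and which has been decomposed above as $M_0 \cupdot M_1$ with $M_0$ dominant on $A' \cup B'$ and $M_1$ stable on $Y \cup Z$), use the fact that $e^* \in M_0$, and splice $M_0$ together with a well-chosen stable matching on $Y \cup Z$ to get a dominant matching of all of $G$ containing $e^*$. First I would take $S$ to be any stable matching of the subgraph $G[Y \cup Z]$ (one exists by Gale--Shapley) and set $N = M_0 \cup S$; note $e^* \in M_0 \subseteq N$, so it remains only to prove $N$ is dominant in $G$. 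For this I will invoke Corollary~\ref{cor0}: I must verify conditions~(i)--(iii) of Theorem~\ref{thm:pop-char} and condition~(iv) (no augmenting path wrt $N$ in $G_N$).

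The key observation that makes this work is that, with respect to $N$, the vertices in $A_1, B_1$ are matched exactly as in $M$, the vertices in $A_0,B_0$ are matched exactly as in $M$, and the vertices in $Y,Z$ are matched by the stable matching $S$. So an edge $e$ of $G$ can have a label change (relative to its $M$-label) only if at least one endpoint lies in $Y \cup Z$; and since $M_0$ already matches all of $A' \cup B'$ while $N$ matches potentially more of $Y \cup Z$ than $M_1$ did, the structure of $G_N$ restricted to $A' \cup B'$ is essentially that of $G_M$ restricted there. The crucial structural facts established just before the lemma are: (a) all $(+,+)$ edges of $M$ lie in $A_0 \times B_1$; (b) every edge in $A_1 \times B_0$ is labeled $(-,-)$; (c) $G_M$ has no edge in $A_1 \times Z$ and no edge in $Y \times B_0$. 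Fact~(c) is what prevents any alternating path in $G_N$ from crossing between the ``$A' \cup B'$ part'' and the ``$Y \cup Z$ part'' in a way that reaches a $(+,+)$ edge: a path entering $Y\cup Z$ can only enter through $A_0 \times Z$ or through $Y \times B_1$, and I would chase through the cases to show that once such a path is inside $Y \cup Z$ it stays there, where $S$ is stable (no $(+,+)$ edges) and $N$ restricted to $Y\cup Z$ contains $S$ hence has no augmenting path there either (a standard fact: a stable matching admits no augmenting path in the subgraph consisting of non-$(-,-)$ edges, because any augmenting path endpoint is unmatched and hence the first edge would be $(+,+)$).

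Concretely, for condition~(iv): suppose $\rho$ is an augmenting path wrt $N$ in $G_N$, with endpoints $p,q$ unmatched in $N$. An unmatched man is in $Y$ (every man in $A_0 \cup A_1$ is matched by $M_0$, and $N \supseteq M_0$) and an unmatched woman is in $Z$, and moreover such endpoints are unmatched by $S$ in $G[Y\cup Z]$. If $\rho$ stays entirely within $Y \cup Z$ it is an augmenting path wrt $S$ in the non-$(-,-)$-subgraph of $G[Y\cup Z]$, impossible since $S$ is stable. So $\rho$ must leave $Y\cup Z$; take the first vertex of $\rho$ outside $Y\cup Z$. By fact~(c), the edge used to leave cannot be in $A_1 \times Z$ or $Y\times B_0$, so $\rho$ enters $A' \cup B'$ via an edge of $A_0 \times Z$ or $Y \times B_1$; in either case I would trace, using (1)--(4) from Section~\ref{sec:onto} together with facts~(a),(b), that the continuation inside $A' \cup B'$ forces an alternating path from an unmatched (in $M$) vertex, or through a $(+,+)$ edge, contradicting condition~(ii) of popularity of $M$ or the dominance of $M_0$. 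The same bookkeeping, with ``augmenting path'' replaced by ``alternating path/cycle carrying one or two $(+,+)$ edges,'' handles conditions~(i)--(iii). The main obstacle I expect is exactly this case analysis at the interface between the two parts: making sure every way an alternating walk in $G_N$ can pass between $A' \cup B'$ and $Y \cup Z$ is blocked, for which the ``no edges in $A_1 \times Z$ and $Y \times B_0$ in $G_M$'' fact is the linchpin, together with noting that passing from $M_M$-labels to $N$-labels on edges touching $Y \cup Z$ can only turn edges into $(-,-)$ or leave them as they were on the $A'\cup B'$ side — so no new $(+,+)$ edges and no new alternating connectivity appear.
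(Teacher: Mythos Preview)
There is a genuine gap in your argument, and it is precisely at the ``standard fact'' you invoke for condition~(iv). You claim that a stable matching $S$ admits no augmenting path in the subgraph of non-$(-,-)$ edges, reasoning that the first edge from an unmatched endpoint would have to be $(+,+)$. This is false: the unmatched endpoint contributes a $+$, but the other endpoint of that first edge is typically matched and may well vote $-$. The instance in Fig.~\ref{fig:1} of the paper is already a counterexample: with $S=\{(a_1,b_1)\}$, the edge $(a_2,b_1)$ is labeled $(+,-)$ and the edge $(a_1,b_2)$ is labeled $(-,+)$, so $a_2\text{--}b_1\text{--}a_1\text{--}b_2$ is an augmenting path wrt $S$ in $G_S$ with no $(+,+)$ edge at all. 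Consequently, if you splice $M_0$ with an arbitrary stable matching $S$ on $Y\cup Z$, condition~(iv) can fail entirely inside $Y\cup Z$, and $N=M_0\cup S$ need not be dominant. (Indeed, taking $S=M_1$ gives $N=M$, which was never assumed dominant.)

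This is exactly why the paper does \emph{not} use a stable matching on $Y\cup Z$: it transforms $M_1$ into a \emph{dominant} matching $M_1^*$ on $H=G[Y\cup Z]$ via the $H'$ construction of Section~\ref{sec:dom-mat}, and sets $M^*=M_0\cup M_1^*$. Dominance of $M_1^*$ is what kills augmenting paths inside $Y\cup Z$. Moreover, the paper does not take an arbitrary dominant matching on $H$ either: it runs Gale--Shapley in $H'$ \emph{starting from the matching induced by $M_1$}. This yields two monotonicity properties --- every $y\in Y_1$ has $M_1^*(y)$ at least as good as $M_1(y)$, and every $z\in Z_0$ has $M_1^*(z)$ at least as good as $M_1(z)$ --- which are exactly what is needed to control the cross-boundary edges (showing that all $(+,+)$ edges wrt $M^*$ lie in $(A_0\cup Y_0)\times(B_1\cup Z_1)$ and all edges in $(A_1\cup Y_1)\times(B_0\cup Z_0)$ are $(-,-)$). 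Your claim that ``no new $(+,+)$ edges appear'' across the boundary also relies on such monotonicity, which an arbitrary stable $S$ does not provide: a woman $z\in Z$ can be strictly worse off in $S$ than in $M_1$, turning a $(+,-)$ edge in $A_0\times Z$ into a $(+,+)$ edge.
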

\begin{proof}
Let $H$ be the induced subgraph of  $G$ on $Y \cup Z$. We will transform the stable matching $M_1$ in $H$ 
to a dominant matching $M^*_1$ in~$H$. We do this by computing a stable matching in the graph $H' = (Y' \cup Z', E')$ -- 
the definition of $H'$ (with respect to $H$) is analogous to the definition of $G'$ (with respect to $G$) in Section~\ref{sec:dom-mat}.
So for each man $y \in Y$, we have two men $y_0$ and $y_1$ in $Y'$ and one dummy woman $d(y)$ in $Z'$; the set $Z' = Z \cup d(Y)$ and
the preference lists of the vertices in $Y' \cup Z'$ are exactly as given in Section~\ref{sec:char} for the vertices in~$G'$.

\smallskip

We wish to compute a dominant matching in $H$, equivalently, a stable matching in~$H'$. However
we will not compute a stable matching in $H'$ from scratch since we want to obtain a dominant matching in $H$ using~$M_1$.
So we compute a stable matching in $H'$ by starting with the following matching in $H'$ (this is essentially the same as $M_1$): 
\begin{itemize}
\item for each edge $(y,z)$ in $M_1$, include the edges $(y_0,z)$ and $(y_1,d(y))$ in this initial matching and for each unmatched 
man $y$ in $M_1$, include the edge $(y_0,d(y))$ in this matching. This is a feasible starting matching as there is no blocking pair
with respect to this matching.
\end{itemize}

Now run the Gale-Shapley algorithm in $H'$ with unmatched men proposing and women disposing. Note that the starting set
of unmatched men is the set of all men $y_1$ where $y$ is unmatched in~$M_1$. However as the algorithm progresses, other men could also get unmatched
and propose. Let $M'_1$ be the resulting stable matching in~$H'$. Let $M^*_1$ be the dominant matching in $H$ corresponding to the stable matching 
$M'_1$ in~$H'$. 

Observe that $M_0$ is untouched by the transformation $M_1 \leadsto M^*_1$. 
Let $M^* = M_0\cup M^*_1$.  Since $e^* \in M_0$, the matching $M^*$ contains~$e^*$.

\begin{new-claim}
\label{clm2}
$M^*$ is a dominant matching in~$G$.
\end{new-claim}

The proof of Claim~\ref{clm2} involves some case analysis and is given at the end of this section.
Thus there is a dominant matching $M^*$ in $G$ that contains $e^*$ 
and this finishes the proof of Lemma~\ref{lem:domn-e}. \qed
\end{proof}

\begin{lemma}
\label{lem:stab-e}
If the edge $e^* \in M_1$ then there exists a stable matching in $G$ that contains~$e^*$.
\end{lemma}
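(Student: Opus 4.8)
\textbf{Proof plan for Lemma~\ref{lem:stab-e}.}

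The plan is to mirror the structure of the proof of Lemma~\ref{lem:domn-e}, but working on the ``other half'' of the decomposition. Recall that $e^* = (u,v)$ and we are in the case $e^* \in M_1 = M \setminus M_0$, so both $u$ and $v$ lie in $Y \cup Z$, where $Y = A\setminus A'$ and $Z = B\setminus B'$. We already know that $M_1$ is a stable matching in the induced subgraph $H$ on $Y\cup Z$. So the task is to ``lift'' the stable-in-$H$ matching $M_1$ containing $e^*$ to a stable-in-$G$ matching containing $e^*$. The key obstacle is the same one that Fig.~\ref{fig:arranged_pair} and Fig.~\ref{fig:1} warn about: a matching that is stable on a vertex subset need not extend to a stable matching on all of $G$, because the vertices of $A_0\cup A_1\cup B_0\cup B_1$ can form blocking pairs with vertices of $Y\cup Z$. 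So the real content is to show that no such cross-blocking-pair arises, using the structural facts already established about $A_0,A_1,B_0,B_1$.

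First I would recall the cross-edge facts proven just before the two lemmas: $G_M$ contains no edge in $A_1\times Z$ (else that $z$ would have entered $B_1$ via Step~3) and no edge in $Y\times B_0$ (else that $y$ would have entered $A_0$ via Step~2); equivalently, every edge between $A_1$ and $Z$, and every edge between $Y$ and $B_0$, is labeled $(-,-)$ with respect to $M$. Also all $(+,+)$ edges lie in $A_0\times B_1$, and edges in $A_1\times B_0$ are $(-,-)$. Now define the candidate matching $N = M_0 \cup M_1 = M$ itself on the parts it already covers — but since we want the $e^*$-containing witness to be exactly $M$ restricted appropriately, I claim the cleanest choice is simply to show $M$ itself (which already contains $e^*$ since $e^*\in M_1\subseteq M$) is ``close to stable'' and then repair only the $A_0\cup B_1$ block. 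More precisely: let $M^*_0$ be a stable matching of the induced subgraph on $A'\cup B' = A_0\cup A_1\cup B_0\cup B_1$ that can be obtained by running Gale--Shapley inside that subgraph starting from $M_0$ (a stable matching of $A'\cup B'$ exists since $M_0$ is dominant, hence popular, there — and we just need \emph{some} stable matching, which exists by Gale--Shapley). Set $N = M^*_0 \cup M_1$. Since $e^*\in M_1$, $N$ contains $e^*$.

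Then I would verify $N$ is stable in $G$ by checking there is no blocking pair. Internal blocking pairs within $A'\cup B'$ are excluded because $M^*_0$ is stable on that induced subgraph; internal blocking pairs within $Y\cup Z$ are excluded because $M_1$ is stable on $H$. The only danger is a cross pair $(a,z)$ with $a\in A'$, $z\in Z$, or $(y,b)$ with $y\in Y$, $b\in B'$. For a cross pair to block $N$, the edge must in particular be present in $G$ and both endpoints must prefer each other to their $N$-partners; since $z\in Z$ has the same partner in $N$ as in $M$ (namely $M_1(z)$), and similarly $y\in Y$ keeps its $M_1$-partner, the ``$z$ prefers $a$ to $M(z)$'' and ``$y$ prefers $b$ to $M(y)$'' halves of any such blocking pair are genuine $M$-preferences. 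The case analysis then splits on which of $A_0,A_1$ the man lies in: an edge $(a,z)$ with $a\in A_1$, $z\in Z$ is $(-,-)$ w.r.t.\ $M$, so $z$ prefers $M(z)=M_1(z)$ to $a$ — not blocking; an edge $(a,z)$ with $a\in A_0$, $z\in Z$: here $a$'s $N$-partner $M^*_0(a)$ is in $B'$, and I would use that $a\in A_0$ means $a$ is reached from an edge labeled $(+,+)$ or an unmatched woman by an odd alternating path in $G_M$, together with condition~(ii) of popularity, to derive that $(a,z)\in G$ with $z$ preferring $a$ over $M(z)$ would create a forbidden alternating path reaching $z$ — contradiction. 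Symmetrically for $(y,b)$ with $b\in B_1$ (use $(-,-)$-ness of $Y\times B_0$ edges when $b\in B_0$, and for $b\in B_1$ use condition~(ii) together with the odd/even alternating path structure of $B_1$). The hard part will be making this last cross-case airtight: showing that a prospective cross blocking pair would necessarily splice onto the alternating path that put $a$ (resp.\ $b$) into $A_0$ (resp.\ $B_1$), thereby violating condition~(ii) or extending the sets $A_0,B_1$ — contradicting either popularity of $M$ or the termination of the set-building algorithm. Once no cross blocking pair survives, $N$ is stable in $G$ and contains $e^*$, completing the proof. \qed
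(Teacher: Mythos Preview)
Your overall strategy is the paper's: keep $M_1$ (which contains $e^*$), replace $M_0$ by a stable matching on $A'\cup B'$, and then rule out cross blocking pairs. The gap is in the replacement step and in the two hard cross cases. The paper does \emph{not} take an arbitrary stable matching on $A'\cup B'$; it builds $M'_0$ by a specific procedure: start from $M_0$ restricted to $A_1\times B_1$ (which has no blocking pair, hence is a valid warm start) and then let the men of $A_0$ propose Gale--Shapley style. This construction buys two monotonicity facts that an arbitrary stable matching need not have: for every $a\in A_0$, the partner $M'_0(a)$ is at least as good as $M_0(a)$ in $a$'s list (because $M_0(a)\in B_0$ starts unmatched and no $A_1$-man can displace $a$ there, since all $A_1\times B_0$ edges are $(-,-)$); and for every $b\in B_1$, the partner $M'_0(b)$ is at least as good as $M_0(b)$ (because $b$ begins matched to $M_0(b)$ and only trades up). These are exactly what close the dangerous cross cases. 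An edge $(a,z)\in A_0\times Z$ labeled $(-,+)$ w.r.t.\ $M$ has $a$ preferring $M_0(a)$ to $z$, hence by the first monotonicity $a$ prefers $M'_0(a)$ to $z$ --- not blocking. An edge $(y,b)\in Y\times B_1$ labeled $(+,-)$ has $b$ preferring $M_0(b)$ to $y$, hence by the second monotonicity $b$ prefers $M'_0(b)$ to $y$ --- not blocking. With ``some stable matching'' you have no such control over $M^*_0(a)$ for $a\in A_0$ or $M^*_0(b)$ for $b\in B_1$, and the argument stalls.

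Your proposed workaround for the $A_0\times Z$ case --- that $z$ preferring $a$ to $M(z)$ would, via the alternating path that put $a$ into $A_0$, violate condition~(ii) of popularity or force $z$ into $B_0\cup B_1$ --- does not go through. In Section~\ref{sec:pop-edge} the sets $A_0,A_1,B_0,B_1$ are initialized empty (unmatched vertices are \emph{not} inserted, unlike in Section~\ref{sec:onto}), so the alternating path witnessing $a\in A_0$ originates at a $(+,+)$ edge, not at an unmatched woman. Splicing on the edge $(a,z)$, which is labeled $(-,+)$ and not $(+,+)$, gives neither a second $(+,+)$ edge nor a path from an unmatched vertex, so none of conditions~(i)--(iii) of Theorem~\ref{thm:pop-char} is violated. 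And Steps~2--3 of the set-building algorithm never add a woman merely because she is adjacent in $G_M$ to a man in $A_0$, so there is no contradiction to the algorithm's termination either. The missing idea is precisely the tailored construction of $M'_0$ that yields the two monotonicity properties above.
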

\begin{proof}
Here will leave $M_1$ untouched and transform the dominant matching $M_0$ on $A' \cup B'$ to a stable matching $M'_0$ on $A' \cup B'$. We do this by 
{\em demoting} all men in~$A_1$. That is, we run the stable matching algorithm on $A' \cup B'$ with preference lists as in 
the original graph $G$, i.e., men in $A_1$ are not promoted over the ones in~$A_0$. Our starting matching is $M_0$ restricted 
to edges in~$A_1 \times B_1$. Since there is no blocking pair with respect to $M_0$ in $A_1 \times B_1$, this is a feasible starting matching.

Now unmatched men (all those in $A_0$) propose in decreasing order of preference to the women in $B'$ and when a woman receives a better proposal than 
what she currently has, she discards her current partner and accepts the new proposal. This may make men in $A_1$ single and so they too propose. This is 
Gale-Shapley algorithm with the only difference being in our starting matching not being empty but  $M_0$ restricted to the edges of~$A_1 \times B_1$. 
Let $M_0'$ be the resulting matching on~$A'\cup B'$. Let $M' = M'_0 \cup M_1$. This is a matching that contains the edge $e^*$ since $e^* \in M_1$. 
\begin{new-claim}
\label{clm4}
$M'$ is a stable matching in~$G$.
\end{new-claim}

The proof of Claim~\ref{clm4} again involves some case analysis and is given at the end of this section.
Thus there is a stable matching $M'$ in $G$ that contains 
the edge $e^*$ and this finishes the proof of Lemma~\ref{lem:stab-e}. \qed
\end{proof}

We have thus shown the correctness of our algorithm. We  can now conclude the following theorem.

\begin{theorem}
\label{thm:popular-edge}
Given a stable marriage instance $G = (A \cup B,E)$ with strict preference lists and an edge $e^* \in E$, we can determine in
linear time if there exists a popular matching in $G$ that contains~$e^*$.
\end{theorem}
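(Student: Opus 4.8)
The plan is to assemble Theorem~\ref{thm:popular-edge} from the pieces already in place. The algorithm has three steps: test for a stable matching containing $e^*$, then test for a dominant matching containing $e^*$, and otherwise report failure. Correctness therefore amounts to two claims: (a)~each of the first two tests is correct and runs in $O(m)$ time, and (b)~if $G$ has any popular matching containing $e^*$, then one of the two tests succeeds. Claim~(b) is exactly the content of Lemmas~\ref{lem:domn-e} and \ref{lem:stab-e} together with the decomposition $M = M_0 \cupdot M_1$ of an arbitrary popular matching $M$ containing $e^*$: since $e^* \in M$, the edge $e^*$ lies in $M_0$ or in $M_1$; in the first case Lemma~\ref{lem:domn-e} produces a dominant matching of $G$ containing $e^*$, and in the second case Lemma~\ref{lem:stab-e} produces a stable matching of $G$ containing $e^*$. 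Conversely, every stable matching and every dominant matching is popular (the former by~\cite{Gar75}, the latter by Definition~\ref{def:dominant}), so whenever either test returns a matching, that matching is a genuine popular matching containing $e^*$. This establishes that the algorithm answers correctly.

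For the running-time part, I would argue as follows. Step~1 is the modified Gale--Shapley procedure of~\cite[Section 2.2.2]{GI89}: have $v$ reject every proposal from a man worse than $u$ (and symmetrically have $u$ never propose past $v$), run the men-proposing algorithm, and check whether the output contains $e^*$; by the cited trichotomy---in any stable matching either $e^* \in M$, or $v$ is matched above $u$, or $u$ is matched above $v$---this correctly decides the stable-edge question, and it runs in $O(m)$ time like ordinary Gale--Shapley. Step~2 reduces, via the surjection $T$ of Section~\ref{sec:dom-mat}, to deciding whether $G'$ has a stable matching containing $(u_0,v)$ or $(u_1,v)$: a dominant matching of $G$ contains $e^*=(u,v)$ iff some stable matching of $G'$ matches $v$ to $u_0$ or to $u_1$ (one direction from the definition of $T$, the other from surjectivity of $T$). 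Each of these two stable-edge queries in $G'$ is again handled by the modified Gale--Shapley algorithm, and since $G'$ has $O(n)$ vertices and $O(m)$ edges, this is $O(m)$ time. So the whole algorithm runs in linear time.

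I do not expect any genuine obstacle here, since all the substantive work has been discharged by the earlier lemmas; the only care needed is bookkeeping. The one point to state carefully is the equivalence used in Step~2: ``$G$ has a dominant matching containing $e^*$'' $\iff$ ``$G'$ has a stable matching containing $(u_0,v)$ or $(u_1,v)$.'' For ``$\Leftarrow$'': if $M'$ is stable in $G'$ and matches $v$ to $u_i$, then $T(M')$ is dominant in $G$ (by the results of Section~\ref{sec:dom-mat}) and contains $(u,v)=e^*$ by the definition of $T$. For ``$\Rightarrow$'': if $M$ is dominant in $G$ with $e^* \in M$, then since $T$ is surjective there is a stable $M'$ in $G'$ with $T(M')=M$; as $(u,v) \in M$, the edge of $M'$ incident to $v$ must be $(u_0,v)$ or $(u_1,v)$, because $T$ replaces $(u_j,v)$ by $(u,v)$ and can only delete edges incident to dummy women. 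A second minor remark: one should also note that the combined $O(m)$ running time of Steps~1 and~2 dominates the trivial constant-time Step~3, so the overall bound is $O(m)$, i.e., linear. With the decomposition established earlier, Lemmas~\ref{lem:domn-e}, \ref{lem:stab-e}, the characterization of stable-edge testing from~\cite{GI89}, and the surjectivity of $T$, the theorem follows. \qed
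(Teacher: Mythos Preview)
Your proposal is correct and follows essentially the same route as the paper: the three-step algorithm, correctness via the decomposition $M=M_0\cupdot M_1$ together with Lemmas~\ref{lem:domn-e} and~\ref{lem:stab-e}, and the linear running time via the modified Gale--Shapley test for a stable edge in $G$ and, through the surjection $T$, for $(u_0,v)$ or $(u_1,v)$ in $G'$. Your explicit verification of the equivalence used in Step~2 (both directions, via the definition and surjectivity of $T$) is a nice bit of bookkeeping that the paper leaves implicit.
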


\subsubsection{Proof of Claim~\ref{clm2}.}
We need to show that $M^* = M_0 \cup M^*_1$ is a dominant matching,
where $M^*_1$ is the dominant matching in $H$ corresponding to the stable matching $M'_1$ in~$H'$.

Let $Y_0$ be the set of men $y \in Y$ such that $(y_1,d(y)) \in M'_1$
and let $Y_1$ be the set of men $y \in Y$ such that $(y_0,d(y)) \in M'_1$.
Let $Z_1$ be the set of those women in $Z$ that are matched in $M'_1$ to men in $Y_1$ and let $Z_0 = Z\setminus Z_1$.

The following properties will be useful to us:
\begin{itemize}
\item[(i)] If $y \in Y_1$, then $M^*_1(y)$ ranks at least as good as $M_1(y)$ in $y$'s preference list. This is because $y \in Y_1$ and note that $Y_1$ is a 
{\em promoted} set when compared to~$Y_0$. Thus $y_1$ gets at least as good a partner in $M^*_1$ as in the men-optimal stable matching in $H$, 
which is at least as good as $M_1(y)$, as $M_1$ is a stable matching in~$H$.
\item[(ii)] If $z \in Z_0$, then $M^*_1(z)$ ranks at least as good as $M_1(z)$ in $z$'s preference list. This is because in the computation of the 
stable matching $M'_1$, if the vertex $z$ rejects $M_1(z)$, then it was upon receiving a better proposal from a neighbor in $Y_0$ (since $z \in Z_0$). 
Thus $z$'s final partner in $M'_1$, and hence in $M^*_1$, ranks at least as good as $M_1(z)$ in her preference list.
\end{itemize}

Label each edge $e=(a,b)$ in $E \setminus M^*$ by the pair of votes $(\alpha_e,\beta_e)$, where
$\alpha_e$ is $a$'s vote for $b$ vs.\ $M^*(a)$ and $\beta_e$ is $b$'s vote for $a$ vs.~$M^*(b)$.
We will first show the following claim here. 
\begin{claim}
Every edge in $(A_1 \cup Y_1) \times (B_0 \cup Z_0)$ is labeled $(-,-)$ with respect to~$M^*$.
\end{claim}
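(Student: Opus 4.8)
The goal is to show that every edge $(a,b)$ with $a \in A_1 \cup Y_1$ and $b \in B_0 \cup Z_0$ is labeled $(-,-)$ with respect to $M^*$; that is, both endpoints prefer their $M^*$-partners to each other. The plan is to split into four cases according to which of the two ``halves'' of $G$ the endpoints $a$ and $b$ live in. First, I would handle the case $a \in A_1$, $b \in B_0$. Both vertices are matched in $M_0$, and in fact $M^*(a) = M_0(a) = M(a)$ and $M^*(b) = M_0(b) = M(b)$ since $M_0$ is untouched by the transformation $M_1 \leadsto M^*_1$. But we already observed, when constructing the sets $A_0,A_1,B_0,B_1$, that every edge in $A_1 \times B_0$ is labeled $(-,-)$ with respect to $M$ (otherwise we contradict condition~(i) or~(iii) of popularity of $M$); since the relevant partners are unchanged, the same edge is labeled $(-,-)$ with respect to $M^*$.

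Next I would handle $a \in Y_1$, $b \in Z_0$. Here both endpoints live in $H$, and $M^*(a) = M^*_1(a)$, $M^*(b) = M^*_1(b)$. By property~(i), $M^*_1(a)$ ranks at least as good as $M_1(a)$ in $a$'s list, and by property~(ii), $M^*_1(b)$ ranks at least as good as $M_1(b)$ in $b$'s list. Now I would use the fact that $M_1$ is \emph{stable} in $H$: for the pair $(a,b)$ in $H$, at least one of $a,b$ prefers its $M_1$-partner to the other. Combining with properties (i) and (ii), that vertex also prefers its $M^*_1$-partner to the other. This gives that $(a,b)$ is not a blocking pair of $M^*_1$ in $H$, hence not labeled $(+,+)$; but I need the stronger conclusion $(-,-)$, so I must argue that the \emph{other} endpoint also prefers its $M^*_1$-partner. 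The cleanest route is: since $M^*_1$ is dominant (hence popular) in $H$, the edge $(a,b) \in E(H) \setminus M^*_1$ is not labeled $(+,-)$ or $(-,+)$ in a way that violates Theorem~\ref{thm:pop-char}; more directly, I should show that $a \in Y_1$ and $b \in Z_0$ play, inside $H'$ / $H$, exactly the roles that $A_1$ and $B_0$ play inside $G'$ / $G$. Then the same argument that showed ``every edge in $A_1 \times B_0$ is labeled $(-,-)$'' applies verbatim inside $H$ with $M^*_1$: all blocking pairs of $M^*_1$ are in $Y_0 \times Z_1$, and any edge in $Y_1 \times Z_0$ would otherwise violate condition~(i) or~(iii) of popularity of $M^*_1$.

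The two remaining cases are the ``cross'' cases $a \in A_1$, $b \in Z_0$ and $a \in Y_1$, $b \in B_0$, and these I expect to be the main obstacle. Consider $a \in A_1$, $b \in Z_0$: I already know $G_M$ contains no edge in $A_1 \times Z$ — otherwise such a $z \in Z$ would have been pulled into $B_1$ by Step~3 — so with respect to the \emph{original} matching $M$, the edge $(a,b)$ is labeled $(-,-)$, i.e.\ $a$ prefers $M(a)$ to $b$ and $b$ prefers $M(b)$ to $a$. Since $M^*(a) = M(a)$, the endpoint $a$ still prefers $M^*(a)$ to $b$. For the endpoint $b$: $M_1(b) = M(b)$ is the partner $b$ prefers to $a$, and $b \in Z_0$, so by property~(ii), $M^*_1(b) = M^*(b)$ ranks at least as good as $M_1(b) = M(b)$ in $b$'s list, hence $b$ prefers $M^*(b)$ to $a$ as well. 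So $(a,b)$ is labeled $(-,-)$ with respect to $M^*$. The symmetric case $a \in Y_1$, $b \in B_0$ is handled the same way, using that $G_M$ contains no edge in $Y \times B_0$ (else $y$ would have entered $A_0$ by Step~2), that $M^*(b) = M(b)$ with $b \in B_0$, and that $a \in Y_1$ gives via property~(i) that $M^*_1(a) = M^*(a)$ ranks at least as good as $M_1(a) = M(a)$, which $a$ already prefers to $b$. Assembling all four cases completes the proof of the claim; the only delicate point is being careful that properties~(i) and~(ii) are exactly the monotonicity statements needed, and that the vertices whose partners change (those in $Y_1$ and $Z_0$) only ever improve, so a strict ``prefers'' survives the transformation.
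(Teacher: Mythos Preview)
Your four-case split and your treatment of the two cross cases $A_1 \times Z_0$ and $Y_1 \times B_0$ match the paper exactly, and the $A_1 \times B_0$ case is also fine.

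The one place your reasoning wobbles is the case $a \in Y_1$, $b \in Z_0$. You rightly note that stability of $M_1$ together with properties~(i) and~(ii) only rules out the label $(+,+)$, not the labels $(+,-)$ or $(-,+)$. But your fallback --- arguing that a non-$(-,-)$ edge in $Y_1 \times Z_0$ would ``violate condition~(i) or~(iii) of popularity of $M^*_1$'' --- does not work as written. That style of argument (the one used in Section~\ref{sec:pop-edge} for $A_1 \times B_0$) depends on the sets being built by the Section~\ref{sec:pop-edge} procedure, so that every vertex of $A_1$ and of $B_0$ is reachable along an alternating path in $G_M$ from an endpoint of some $(+,+)$-edge; it is this reachability that triggers conditions~(i)/(iii) when a non-$(-,-)$ edge is present. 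The sets $Y_1, Z_0$ here are \emph{not} built that way: they are read off directly from the stable matching $M'_1$ in $H'$ (namely $y \in Y_1$ iff $(y_0,d(y)) \in M'_1$, and $Z_0 = Z \setminus Z_1$), and there is no a priori reason vertices in $Y_1$ or $Z_0$ lie on alternating paths to $(+,+)$-edges of $M^*_1$.

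The clean mechanism --- and the one the paper actually uses --- is part~(2) of Claim~\ref{clm1}, applied inside $H$ and $H'$: the sets $Y_1$ and $Z_0$ are precisely the $f$-value-$1$ men and $f$-value-$0$ women for $M^*_1 = T(M'_1)$, and Claim~\ref{clm1}(2), whose proof uses only the stability of $M'_1$ in $H'$, says directly that every such edge is labeled $(-,-)$. Once you invoke this, Case~2 becomes a one-line citation and your proof is complete.
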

We already know that every edge in $A_1 \times B_0$ is labeled $(-,-)$ with respect to $M_0$ and
as shown in part~(1) of Claim~\ref{clm1}, it is easy to see that every edge in $Y_1 \times Z_0$ is labeled $(-,-)$ with respect to~$M^*_1$.
We will now show that all edges in $(Y_1 \times B_0) \cup (A_1 \times Z_0)$ are labeled~$(-,-)$ with respect to $M$.
\begin{itemize}
\item Consider any edge $(y,b) \in Y_1 \times B_0$. We know that $(y,b)$ was labeled $(-,-)$ with respect to~$M$.  We have 
$M^*(b) = M_0(b) = M(b)$. Thus $b$ prefers $M^*(b)$ to~$y$. 
The man $y$ preferred $M(y)$ to $b$ and since $y \in Y_1$, we know from (i) above that $y$ ranks $M^*_1(y)$ at least as good as
$M_1(y) = M(y)$. Thus the edge $(y,b)$ is labeled $(-,-)$ with respect to $M^*$ as well.

\item Consider any edge in $(a,z) \in A_1 \times Z_0$. We know that $(a,z)$ was labeled $(-,-)$ with respect to~$M$.
We have $M^*(a) = M_0(a) = M(a)$. Thus $a$ prefers $M^*(a)$ to~$z$. The woman $z$ preferred $M_1(z)$ to $a$ and we know from
(ii) above that $z$ ranks $M^*_1(z)$ at least as good as~$M_1(z)$.  Thus the edge $(a,z)$ is labeled $(-,-)$ 
with respect to $M^*$ as well.
\end{itemize}
Thus we have shown that every edge in $(A_1 \cup Y_1) \times (B_0 \cup Z_0)$ is labeled~$(-,-)$. 
We will now show the following claim.
\begin{claim}
Any edge labeled $(+,+)$ with respect to $M^*$ has to be in $(A_0 \cup Y_0) \times (B_1 \cup Z_1)$. 
\end{claim}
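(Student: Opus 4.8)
The plan is to prove the contrapositive: take an arbitrary edge $(a,b)$ that is \emph{not} in $(A_0 \cup Y_0) \times (B_1 \cup Z_1)$, and show it cannot be labeled $(+,+)$ with respect to $M^*$. Since $A' \cup B' \cup Y \cup Z$ partitions $A \cup B$ with $A = A_0 \cupdot A_1 \cupdot (Y_0 \cupdot Y_1)$ and $B = B_0 \cupdot B_1 \cupdot (Z_0 \cupdot Z_1)$, the man $a$ lies in exactly one of $A_0, A_1, Y_0, Y_1$ and the woman $b$ lies in exactly one of $B_0, B_1, Z_0, Z_1$. The hypothesis ``$(a,b) \notin (A_0 \cup Y_0) \times (B_1 \cup Z_1)$'' means either $a \in A_1 \cup Y_1$, or $b \in B_0 \cup Z_0$ (or both). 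So I would split into the two overlapping cases $a \in A_1 \cup Y_1$ and $b \in B_0 \cup Z_0$, and in each case argue the edge gets at least one $-$.

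First I would handle the case $a \in A_1 \cup Y_1$. If both endpoints lie in $(A_1 \cup Y_1) \times (B_0 \cup Z_0)$, the previous claim already gives label $(-,-)$, so assume $b \in B_1 \cup Z_1$. The idea is that men in $A_1 \cup Y_1$ are ``satisfied'': for $a \in A_1$ we have $M^*(a) = M_0(a) = M(a)$, and for $y \in Y_1$ property~(i) says $M^*_1(y)$ is at least as good as $M_1(y) = M(y)$. I would then use that the original matching $M$ was popular: every edge incident on a vertex of $A_1$ or $Y_1$ in $G_M$ must be labeled so that this man does not form a $(+,+)$ edge, because $A_1$-vertices are reachable by even alternating paths from endpoints of $(+,+)$ edges (Case~4 of Lemma~\ref{clm3}), so a further $(+,+)$ edge at $a$ would violate condition~(i) or~(iii) of Theorem~\ref{thm:pop-char}; for $Y_1$ the same follows since $M^*_1$ is dominant in $H$ (Corollary~\ref{cor0}). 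Since $a$'s partner only improved from $M$ to $M^*$, if $a$ voted $-$ against $b$ under $M$ it still votes $-$ under $M^*$; if it had a $(+,+)$ edge to $b$ under $M$ that is already excluded. This gives $\alpha_{(a,b)} = -$, so the edge is not $(+,+)$.

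Symmetrically, for the case $b \in B_0 \cup Z_0$, I would argue the women in $B_0 \cup Z_0$ are likewise ``satisfied'': for $b \in B_0$, $M^*(b) = M_0(b) = M(b)$, and for $z \in Z_0$, property~(ii) gives $M^*_1(z)$ at least as good as $M_1(z) = M(z)$. The edges in $A_1 \times B_0$ were labeled $(-,-)$ under $M$, and the reachability structure (women of $B_0$ lie on odd alternating paths to $(+,+)$ endpoints, Case~4 of Lemma~\ref{clm3}) rules out a fresh $(+,+)$ edge at $b$ without contradicting popularity conditions~(i)/(iii); for $Z_0$, dominance of $M^*_1$ in $H$ plays the same role. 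Hence $\beta_{(a,b)} = -$, and again the edge is not $(+,+)$. Combining the two cases, any $(+,+)$ edge must avoid both obstructions, i.e., lie in $(A_0 \cup Y_0) \times (B_1 \cup Z_1)$.

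The main obstacle I anticipate is the cross terms $Y_1 \times B_1$, $A_1 \times Z_1$, $Y_1 \times Z_1$, and their mirror images $A_0 \times Z_0$ etc.: these are edges between the two ``worlds'' $A' \cup B'$ and $Y \cup Z$ that did not exist as edges under the algorithm producing $A_0, A_1, B_0, B_1$, so I must carefully use that $G_M$ contained no edges in $A_1 \times Z$ or $Y \times B_0$ — established just before Lemma~\ref{lem:domn-e} — to know which cross edges were $(-,-)$ under $M$ to begin with, and then track how the two local transformations $M_1 \leadsto M^*_1$ (inside $H$) and the untouched $M_0$ interact. The bookkeeping is routine once the partition and the monotonicity facts (i) and (ii) are in hand, but getting every cross case right is where the care is needed.
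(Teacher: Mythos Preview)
Your approach is essentially the paper's: rule out each ``bad'' block of $A \times B$ by combining the fact that the edge was not $(+,+)$ with respect to the original $M$ with the monotonicity properties~(i) and~(ii) that track how partners change under $M \leadsto M^*$. However, there is a genuine gap in the execution. In your first big case ($a \in A_1 \cup Y_1$, $b \in B_1 \cup Z_1$) you conclude ``$\alpha_{(a,b)} = -$ with respect to $M^*$''. This does not follow. Take $(y,b) \in Y_1 \times B_1$ that was labeled $(+,-)$ with respect to $M$: here $y$ prefers $b$ to $M(y)$, and property~(i) only says $M^*_1(y)$ is \emph{at least as good as} $M_1(y)$; if $M^*_1(y) = M_1(y)$ then the first coordinate stays~$+$. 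The same happens for an edge in $A_1 \times B_1$ labeled $(+,-)$ with respect to $M$ (both endpoints keep their $M$-partners under $M^*$, so the label is literally unchanged). Dually, in your second big case, an edge in $A_0 \times Z_0$ or $A_0 \times B_0$ labeled $(-,+)$ with respect to $M$ refutes your conclusion ``$\beta_{(a,b)} = -$''.

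The fix is not to insist on a fixed coordinate. From ``not $(+,+)$ with respect to $M$'' you only get that \emph{some} coordinate is $-$ with respect to $M$; you then argue that \emph{that particular} coordinate stays $-$ under $M^*$ by checking that the corresponding endpoint's partner did not get worse. Concretely: for $b \in B'$ the partner is unchanged; for $a \in A'$ the partner is unchanged; for $y \in Y_1$ property~(i) applies; for $z \in Z_0$ property~(ii) applies. This is exactly how the paper organizes the four cross-block cases $A_1 \times Z_1$, $Y_1 \times B_1$, $Y_0 \times B_0$, $A_0 \times Z_0$, sub-splitting in cases~(2) and~(4) on which coordinate was $-$ under $M$. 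All the ingredients are already present in your sketch; you just need to rearrange the logic so that the conclusion in each case is ``the edge is not $(+,+)$'' rather than the stronger (and false) ``$\alpha = -$'' or ``$\beta = -$''.
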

\begin{figure}[h]
\centerline{\resizebox{0.4\textwidth}{!}{\input{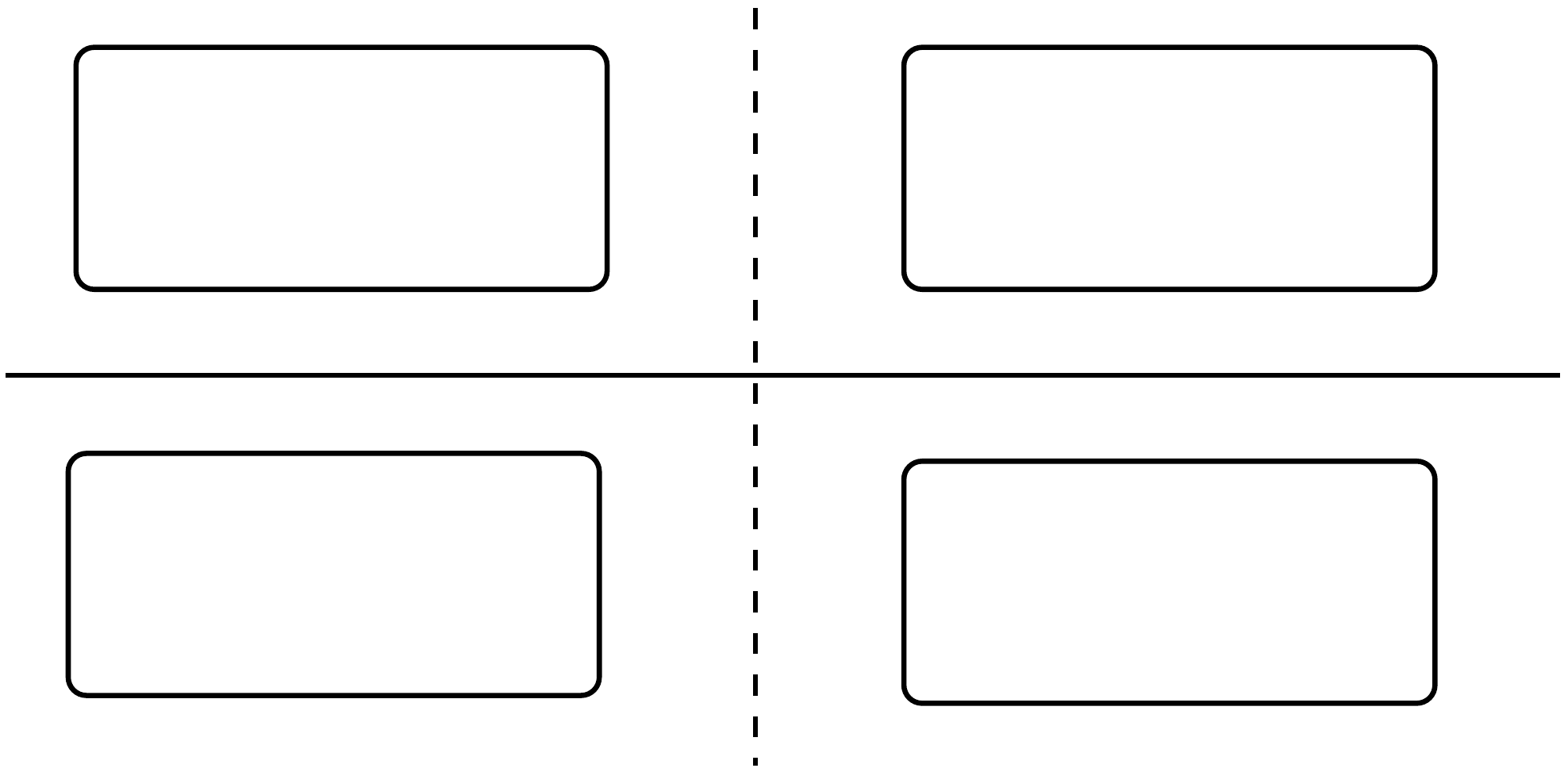_t}}}
\caption{All edges in $(A_1\cup Y_1)\times(B_0\cup Z_0)$ are labeled $(-,-)$ wrt $M^*$ and all edges labeled $(+,+)$ wrt $M^*$ are in $(A_0\cup Y_0) \times (B_1\cup Z_1)$.}
\label{fig:lemma6}
\end{figure}

Note that we already know that no edge in $A_i \times B_i$ 
is labeled $(+,+)$ wrt $M_0$ and no edge in $Y_i \times Z_i$ is labeled $(+,+)$ wrt $M^*_1$, for $i = 0,1$. 
We will now show that no edge in $\cup_{i=0}^1(A_i \times Z_i) \cup (Y_i \times B_i)$ is labeled $(+,+)$ (see Fig.~\ref{fig:lemma6}).
\begin{enumerate}
\item[(1)] Consider any edge in $(a,z) \in A_1 \times Z_1$. We know that $(a,z)$ was labeled $(-,-)$ wrt $M$. Since $M^*(a) = M_0(a) = M(a)$, 
the first coordinate in this edge label wrt $M^*$ is still~$-$. Thus this edge is not labeled $(+,+)$ wrt~$M^*$.

\item[(2)] Consider any edge in $(y,b) \in Y_1 \times B_1$: there was no edge labeled $(+,+)$ wrt $M$ in~$Y \times B_1$. 

-- Suppose $(y,b)$ was labeled $(-,-)$ or $(+,-)$ wrt~$M$. Since $M^*(b) = M_0(b) = M(b)$, the second coordinate in this edge label with 
respect to $M^*$ is still~$-$. Thus this edge is not labeled $(+,+)$ wrt~$M^*$. 

-- Suppose $(y,b)$ was labeled $(-,+)$ wrt~$M$. Since $y \in Y_1$,  we know from (i) above that $y$ ranks $M^*_1(y)$ at least as good as~$M_1(y)$. 
Hence the first coordinate in this edge label wrt $M^*$ is still~$-$. Thus this edge is not labeled $(+,+)$ wrt~$M^*$.

\item[(3)] Consider any edge $(y,b) \in Y_0 \times B_0$: we know that $(y,b)$ was labeled $(-,-)$ wrt~$M$. Since $M^*(b) = M_0(b) = M(b)$, 
the second coordinate in this edge label wrt $M^*$ is still~$-$. Thus this edge is not labeled $(+,+)$ wrt~$M^*$.

\item[(4)] Consider any edge in $(a,z) \in A_0 \times Z_0$: there was no edge labeled $(+,+)$ wrt $M$ in~$A_0 \times Z$. 

-- Suppose $(a,z)$ was labeled $(-,-)$ or $(-,+)$ wrt~$M$. Since $M^*(a) = M_0(a) = M(a)$, the first coordinate in this edge label wrt $M^*$ is 
still~$-$. Thus this edge is not labeled $(+,+)$ wrt~$M^*$. 

-- Suppose $(a,z)$ was labeled $(+,-)$ wrt~$M$. Since $z \in Z_0$, we know from (ii) above that $z$ ranks $M^*_1(z)$ at least as good as~$M_1(z)$. 
Hence the second coordinate in this edge label wrt $M^*$ is still~$-$. Thus this edge is not labeled $(+,+)$  wrt~$M^*$.
\end{enumerate}

Thus any edge labeled $(+,+)$ has to be in $(A_0 \cup Y_0) \times (B_1 \cup Z_1)$.
This fact along with the earlier claim that all edges in $(A_1 \cup Y_1) \times (B_0 \cup Z_0)$ are labeled $(-,-)$, immediately implies that 
Claim~\ref{clm1} holds here, where
we assign $f$-values to all vertices in $A \cup B$ as follows: if $a \in A_1 \cup Y_1$ then $f(a) = 1$ else $f(a) = 0$; 
similarly, if $b \in B_1 \cup Z_1$ then $f(b) = 1$ else~$f(b) = 0$. 

Thus if the edge $(a,b)$ is labeled $(+,+)$, then $f(a) = 0$ and $f(b) = 1$, and
if $(y,z)$ is an edge such that $f(y) = 1$ and $f(z) = 0$, then $(y,z)$ has to be labeled~$(-,-)$.
Lemmas~\ref{lem:aug-path} and \ref{lem:popular} with $M^*$ replacing $M$ follow now (since all they need is Claim~\ref{clm1}). 
We can conclude that $M^*$ is dominant in~$G$. Thus there is a dominant matching in $G$ that contains~$e^*$. \qed

\subsubsection{Proof of Claim~\ref{clm4}.}
We will now show that $M' = M'_0 \cup M_1$ is a stable matching. We already know that there is no edge labeled $(+,+)$ in $A' \times B'$ with respect 
to $M'_0$ and there is no edge labeled $(+,+)$ in $Y \times Z$ with respect to~$M_1$. Now we need to show that there is no edge labeled $(+,+)$ 
either in $A' \times Z$ or in~$Y \times B'$. 

Label each edge $e=(a,b)$ in $E\setminus M'$ by the pair of 
votes $(\alpha_e,\beta_e)$, where $\alpha_e$ is $a$'s vote for $b$ vs.\ $M'(a)$ and $\beta_e$ is $b$'s vote for $a$ vs.~$M'(b)$.
We will first show that there is no edge labeled $(+,+)$ in $A' \times Z$, i.e., in $(A_0\cup A_1) \times Z$.

\begin{itemize}
\item[(1)] Consider any $(a,z) \in A_1 \times Z$: this edge was labeled $(-,-)$ wrt~$M$. Since $M'(z) = M_1(z) = M(z)$,
the second coordinate of the label of this edge wrt $M'$ is~$-$. Thus this edge cannot be labeled $(+,+)$ wrt~$M'$. 

\item[(2)] Consider any $(a,z) \in A_0 \times Z$: there was no edge labeled $(+,+)$ wrt $M$ in $A' \times Z$.  

-- Suppose $(a,z)$ was labeled $(+,-)$ or $(-,-)$ wrt~$M$. Since $M'(z) = M_1(z) = M(z)$,
the second coordinate of the label of this edge wrt $M'$ is~$-$.

-- Suppose $(a,z)$ was labeled~$(-,+)$. Since $a \in A_0$, his neighbor $M'_0(a)$ is ranked at least as good as $M_0(a)$ in his preference
list. This is because 
women in $B_0$ are unmatched in our starting matching and no woman $b \in B_0$ prefers any neighbor in $A_1$ to $M_0(b)$ (all edges in
$A_1\times B_0$ are labeled $(-,-)$ wrt $M_0$). Thus in our algorithm that computes~$M'_0$, $a$ will get accepted either by $M_0(a)$ or a better neighbor.
Hence the first coordinate of this edge label wrt $M'$ is still~$-$.
\end{itemize}

We will now show that there is no edge labeled $(+,+)$  with respect to $M'$ in  $Y \times B'$, i.e., in $Y \times (B_0 \cup B_1)$.
\begin{itemize}
\item[(3)] Consider any $(y,b) \in Y \times B_0$: the edge $(y,b)$ was labeled $(-,-)$ wrt~$M$. Since $M'(y) = M_1(y) = M(y)$,
the first coordinate of the label of this edge wrt $M'$ is~$-$. Thus this edge cannot be labeled $(+,+)$ wrt~$M'$. 
\item[(4)] Consider any $(y,b) \in Y \times B_1$: there was no edge labeled $(+,+)$ wrt $M$ in $Y \times B'$. 

-- Suppose $(y,b)$ was labeled $(-,+)$ or $(-,-)$ wrt~$M$. Since $M'(y) = M_1(y) = M(y)$,
the first coordinate of the label of this edge wrt $M'$ is~$-$.

-- Suppose $(y,b)$ was labeled~$(+,-)$. Since $b \in B_1$, her neighbor $M'_0(b)$ is ranked at least as good as $M_0(b)$ in her preference
list. This is because our starting matching matched $b$ to $M_0(b)$ and $b$ would reject $M_0(b)$ only upon receiving a better proposal. Thus 
the second coordinate of the label of this edge with respect to $M'$ is~$-$.
\end{itemize}

This completes the proof that there is no edge labeled $(+,+)$ with respect to $M'$ in~$G$. In other words, $M'$ is a stable matching in~$G$. \qed

\section{Finding an unstable popular matching}
\label{sec:dom-vs-stab}
We are given $G = (A \cup B,E)$ with strict preference lists and we would like to know if every popular matching in $G$ is also stable.
In order to answer this question, we could compute a dominant matching $D$ and a stable matching $S$ in~$G$. If $|D| > |S|$, then
it is obviously the case that not every popular matching in $G$ is stable. However it could be the case that $D$ is stable (and so $|D| = |S|$).

We now show an efficient algorithm to check if $\{$popular matchings$\} = \{$stable matchings$\}$ or not in $G$. Note that in the latter
case, we have $\{$stable matchings$\} \subsetneq \{$popular matchings$\}$ in~$G$.

\smallskip

Let $G$ admit an unstable popular matching~$M$. We know that $M$ can be partitioned into $M_0 \cupdot M_1$, as described in Section~\ref{sec:pop-edge}.
Here $M_0$ is a dominant matching on $A' \cup B'$ and $M_1$ is stable on $Y \cup Z$, where $Y = A \setminus A'$ and $Z = B \setminus B'$ (refer to Fig.~\ref{fig:first}).
Since $M$ is unstable, there is an edge $(a,b)$ that blocks~$M$. Since there is no blocking pair involving on any vertex in $Y \cup Z$,
it has to be the case that $a \in A'$ and $b \in B'$ (in particular, $a \in A_0$ and $b \in B_1$).

Run the transformation $M_1 \leadsto M^*_1$ performed in the proof of Lemma~\ref{lem:domn-e}. Claim~\ref{clm2} tells us that 
$M^* = M_0 \cup M^*_1$ is a dominant matching. The edge $(a,b)$ is a blocking pair to $M^*$ since $M^*(a) = M_0(a)$ and $M^*(b) = M_0(b)$, 
so $a$ and $b$  prefer each other to their respective partners in~$M^*$. Thus $M^*$ is an unstable dominant matching and
Lemma~\ref{non-stab-domn} follows.

\begin{lemma}
\label{non-stab-domn}
If $G$ admits an unstable popular matching then $G$ admits an unstable dominant matching.
\end{lemma}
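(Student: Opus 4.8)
The plan is to take the unstable popular matching $M$, split off its ``dominant part'' using the decomposition of Section~\ref{sec:pop-edge}, and then repair the remaining ``stable part'' into a dominant matching in a way that leaves a blocking edge intact.

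First I would write $M = M_0 \cupdot M_1$ as in Section~\ref{sec:pop-edge}, where $M_0$ is the restriction of $M$ to $A' \cup B'$ and is dominant on the induced subgraph there, while $M_1 = M \setminus M_0$ is stable on the induced subgraph $H$ on $Y \cup Z$ (with $Y = A \setminus A'$ and $Z = B \setminus B'$). Since $M$ is unstable, I would fix a blocking edge $(a,b)$; such an edge is labeled $(+,+)$ with respect to $M$, and Section~\ref{sec:pop-edge} already establishes that every $(+,+)$ edge lies in $A_0 \times B_1$. Hence $a \in A_0$ and $b \in B_1$, so both endpoints of the blocking edge lie in $A' \cup B'$ and their $M$-partners are given by $M_0$.

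Next I would run the transformation $M_1 \leadsto M^*_1$ from the proof of Lemma~\ref{lem:domn-e}: it replaces the stable matching $M_1$ on $H$ by a dominant matching $M^*_1$ on $H$ and does not touch $M_0$. By Claim~\ref{clm2}, $M^* = M_0 \cup M^*_1$ is a dominant matching in $G$. Since $M^*(a) = M_0(a) = M(a)$ and $M^*(b) = M_0(b) = M(b)$, and $a$ and $b$ each prefer the other to this common partner, the edge $(a,b)$ still blocks $M^*$. Thus $M^*$ is an unstable dominant matching, which is exactly what is required.

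The one step that needs care is the observation that any blocking edge of $M$ has both endpoints inside $A' \cup B'$ — this is what guarantees that the transformation acts only on $M_1$, which the blocking pair never meets. It follows directly from the $(+,+)$-edge classification already proven in Section~\ref{sec:pop-edge}; alternatively one can argue it by hand, checking that no vertex of $Y \cup Z$ can belong to a blocking pair, using that $M_1$ is stable on $Y \cup Z$, that $G_M$ contains no edge in $A_1 \times Z$ or $Y \times B_0$, and that no edge of $Y \times B_1$ is labeled $(+,+)$ with respect to $M$. Everything else is straightforward bookkeeping, already carried out in the proofs of Lemma~\ref{lem:domn-e} and Claim~\ref{clm2}.
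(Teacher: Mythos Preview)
Your proposal is correct and follows essentially the same approach as the paper: decompose $M = M_0 \cupdot M_1$ via Section~\ref{sec:pop-edge}, observe that any blocking edge lies in $A_0 \times B_1$ so its endpoints are untouched by the transformation $M_1 \leadsto M^*_1$, and then invoke Claim~\ref{clm2} to conclude that $M^* = M_0 \cup M^*_1$ is a dominant matching still blocked by $(a,b)$. The paper's argument is exactly this, stated slightly more tersely.
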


Hence in order to answer the question of whether every popular matching in $G$ is stable or not, we need to decide if there
exists a dominant matching $M$ in $G$ with a blocking pair.  
We will use the mapping $T: \{$stable matchings in $G'\} \rightarrow \{$dominant matchings in $G\}$ defined in Section~\ref{sec:dom-mat} here. 
Our task is to determine if there exists a stable matching in $G'$ that includes a pair of edges $(a_0,v)$ and $(u_1,b)$ such that $a$ and $b$ 
prefer each other to $v$ and $u$, respectively, in~$G$. It is easy to decide in $O(m^3)$ time whether such a stable matching exists or not in~$G'$. 
\begin{itemize}
\item For every pair of edges $e_1 = (a,v)$ and $e_2 = (u,b)$ in $G$ such that
$a$ and $b$ prefer each other to $v$ and $u$, respectively: determine if there is a stable matching in $G'$ that contains the pair of edges
$(a_0,v)$ and~$(u_1,b)$. 
\end{itemize}

An algorithm to construct a stable matching that contains a pair of edges (if such a matching exists) is similar to the algorithm described earlier 
to construct a stable matching that contains a single edge~$e$. In the graph $G'$, we modify Gale-Shapley algorithm so that $b$ rejects proposals 
from all neighbors ranked worse than $u_1$ and $v$ rejects all proposals from neighbors ranked worse than~$a_0$. 

If the resulting algorithm returns a stable matching that contains the edges $(a_0,v)$ and $(u_1,b)$, then we have the desired matching; else $G'$ has no stable 
matching that contains this particular pair of edges.  In order to determine if there exists an unstable dominant matching, we may need to go through all
pairs of edges $(e_1,e_2) \in E\times E$. Since we can determine in linear time if there exists a stable matching in $G'$ with any particular pair of edges,
the entire running time of this algorithm is $O(m^3)$, where $m = |E|$.

\paragraph{A faster algorithm.} It is easy to improve the running time to $O(m^2)$.
For each edge $(a,b) \in E$ we check for the following:
\begin{itemize}
\item[($\ast$)] a stable matching in $G'$ such that (1)~$a_0$ is matched to a neighbor that is ranked 
worse than $b$, and (2)~$b$ is matched to a neighbor $u_1$ where $u$ is ranked worse than $a$ in $b$'s list.
\end{itemize}

We modify the Gale-Shapley algorithm in $G'$ so that (1)~$b$ rejects all offers from level~0 neighbors, i.e., $b$ accepts proposals only from level~1 neighbors,
and (2)~every neighbor of $a_0$ that is ranked better than $b$ rejects proposals from~$a_0$. 

Suppose ($\ast$) holds.
Then this modified Gale-Shapley algorithm returns among all such stable matchings, the most men-optimal and women-pessimal one~\cite{GI89}. 
Thus among all stable matchings that match $a_0$ to
a neighbor ranked worse than $b$ and the woman $b$ to a level~1 neighbor, the matching returned by the above algorithm 
matches $b$ to its least preferred neighbor and $a_0$ to its most preferred neighbor.

Hence if the modified Gale-Shapley algorithm returns a matching that is (i)~unstable or (ii)~matches  $a_0$ to $d(a_0)$ or (iii)~matches $b$ to a 
neighbor better than $a_1$, then there is no dominant matching $M$ in $G$ such that the pair $(a,b)$ blocks~$M$. 
Else we have the desired stable matching in $G'$, call this matching~$M'$. 

The matching $T(M')$ will be a dominant matching in $G$ where the pair $(a,b)$ blocks~$T(M')$. Since we may need to go through all edges in $E$ and 
the time taken for any edge $(a,b)$ is $O(m)$, the entire running time of this algorithm is~$O(m^2)$. 
We have thus shown the following theorem.

\begin{theorem}
\label{thm:unstable}
Given $G = (A \cup B,E)$ where every vertex has a strict ranking over its neighbors, we can decide in $O(m^2)$ time whether every 
popular matching in $G$ is stable or not; if not, we can return an unstable popular matching.
\end{theorem}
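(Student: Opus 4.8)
The plan is to combine three ingredients already established in the paper: (1)~Lemma~\ref{non-stab-domn}, which reduces the question ``is every popular matching stable?'' to ``is every dominant matching stable?''; (2)~the surjective mapping $T$ from stable matchings in $G'$ to dominant matchings in $G$ (Section~\ref{sec:dom-mat}), which lets us search for an unstable dominant matching by searching for an appropriate stable matching in $G'$; and (3)~the modified Gale--Shapley algorithm for finding a stable matching containing a prescribed edge (or, in the refined version, satisfying prescribed one-sided rank constraints), whose correctness is quoted from~\cite{GI89}. The theorem then follows by bounding the running time of the resulting search.

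First I would record the reduction: by Lemma~\ref{non-stab-domn}, $G$ has an unstable popular matching if and only if $G$ has an unstable dominant matching, i.e.\ a dominant matching $M$ admitting a blocking pair $(a,b)$. So it suffices to decide whether such a pair $(a,b)$ and such an $M$ exist. Next I would translate ``$M$ is dominant and $(a,b)$ blocks $M$'' into a statement about stable matchings in $G'$ via $T$: since $T$ is surjective, $M = T(M')$ for some stable $M'$ in $G'$, and since $T$ replaces each edge $(a_i,v)$ of $M'$ by $(a,v)$, the condition that $(a,b)$ blocks $M$ says that in $M'$ the vertex $a$ (as $a_0$ or $a_1$) is matched to a neighbor worse than $b$, and $b$ is matched to some $u_1$ (necessarily a level-1 vertex, since $a$ prefers $b$, so $a$'s level-1 copy would otherwise block) with $u$ worse than $a$ on $b$'s list. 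This is exactly condition~($\ast$). For the $O(m^2)$ bound I would then, for each edge $(a,b)\in E$, run the modified Gale--Shapley algorithm in $G'$ in which $b$ accepts only level-1 proposals and every neighbor of $a_0$ ranked better than $b$ rejects $a_0$; by the standard result of~\cite{GI89} this returns, if ($\ast$) is satisfiable, the men-optimal/women-pessimal stable matching among those satisfying the one-sided constraints. Hence ($\ast$) holds iff the returned matching is stable, does not match $a_0$ to $d(a)$, and matches $b$ to a level-1 neighbor no better than $a_1$; in that case $T(M')$ is the desired unstable dominant matching, witnessing that not every popular matching is stable. Each such run costs $O(m)$ (a single Gale--Shapley pass on $G'$, which has $O(m)$ edges), and there are $m$ edges, giving $O(m^2)$ overall. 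If no edge $(a,b)$ succeeds, then no dominant matching is unstable, so by Lemma~\ref{non-stab-domn} every popular matching in $G$ is stable.

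The only genuinely delicate point — and the step I expect to be the main obstacle to write out cleanly — is verifying that the one-sided rank constraints in ($\ast$) faithfully capture ``$(a,b)$ blocks a dominant matching'', and in particular that it is safe to insist $b$ be matched to a \emph{level-1} copy rather than merely ``a neighbor worse than $a$''. One has to argue: (i)~if $(a,b)$ blocks the dominant matching $T(M')$, then in $G'$ the woman $b$ cannot be matched to a level-0 copy — because $a$ prefers $b$ to $T(M')(a)$, so in $G'$ the level-1 copy $a_1$ (which ranks all of $a$'s real neighbors above $d(a)$, and in particular ranks $b$ above $T(M')(a)$ if $a_1\notin M'(b)$'s bracket) would form a blocking pair in $G'$, contradicting stability of $M'$; a short case check on whether $a$ is matched via $a_0$ or $a_1$ in $M'$ closes this. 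And (ii)~conversely, any stable $M'$ in $G'$ meeting ($\ast$) yields via $T$ a dominant matching in which $(a,b)$ genuinely blocks, which is immediate from the definition of $T$. Granting these, correctness of the modified Gale--Shapley algorithm is exactly the cited fact from~\cite{GI89} (the one-sided ``delete everyone worse than $u_1$ from $b$'s list / everyone better than $b$ from $a_0$'s list'' modification), and no new stable-matching theory is needed.
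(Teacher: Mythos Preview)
Your proposal is correct and follows the paper's argument essentially verbatim: reduce via Lemma~\ref{non-stab-domn} to finding an unstable dominant matching, translate that to the search for a stable matching in $G'$ satisfying condition~($\ast$), and run the modified Gale--Shapley once per edge $(a,b)$ for the $O(m^2)$ bound. One small slip in your justification of~($\ast$): in $G'$ the vertex $a_1$ has $d(a)$ as its \emph{top} choice, not its last, so when $(a_1,d(a))\in M'$ your ``$a_1$ would block'' argument does not apply and the blocker forcing $b$ to level~1 is $a_0$ instead --- but your acknowledged case split on whether $a$ is matched via $a_0$ or $a_1$ handles this, and the paper itself simply asserts~($\ast$) without spelling out this derivation.
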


\subsubsection{Conclusions and Open problems.} We considered the popular edge problem in a stable marriage instance $G= (A \cup B,E)$
with strict preference lists and showed a linear time algorithm for this problem. A natural extension is that we are given $k$ edges 
$e_1,\ldots,e_k$, for $k \ge 2$, and we would like to know if there exists a popular matching that contains {\em all} these $k$ edges. 
Another open problem is to efficiently find among all popular matchings that contain a particular edge $e^*$, one of largest size.
There are no polynomial time algorithms currently known to construct a popular matching in $G$ that is neither stable nor dominant.

The first polynomial time algorithm for the stable matching problem in general graphs (not necessarily bipartite) was given by Irving~\cite{Irv85}. 
On the other hand, the complexity of the popular matching problem in general graphs is open. 
Is there a polynomial time algorithm for the dominant matching problem in $G$? 

\medskip

\noindent{\em Acknowledgment.} Thanks to Chien-Chung Huang for useful discussions which led to the definition of dominant matchings.

\bibliographystyle{abbrv}
\bibliography{mybib}

\subsubsection*{Appendix: An overview of the maximum size popular matching algorithms in \cite{HK13,Kav12-journal}.} 
Theorem~\ref{thm:pop-char} (from \cite{HK13}) stated in Section~\ref{sec:char} showed conditions~(i)-(iii) as 
necessary and sufficient conditions for a matching to be popular in~$G = (A \cup B,E)$. It was also observed in \cite{HK13} 
that condition~(iv): {\em there is no augmenting path with respect to $M$ in $G_M$} was  
a sufficient condition for a popular matching to be a maximum size popular matching.

The goal was to construct a matching $M$ that satisfied conditions~(i)-(iv). The algorithm in \cite{HK13} computed 
appropriate subsets $L$ and $R$ of $A \cup B$ and showed that running Gale-Shapley algorithm with vertices of $L$
proposing and vertices of $R$ disposing resulted in a matching that obeyed conditions~(i)-(iv). Constructing these sets
$L$ and $R$ took $O(n_0)$ iterations, where $n_0 = \min(|A|,|B|)$; each iteration involved two invocations of the Gale-Shapley
algorithm. Thus the running time of this algorithm was~$O(mn_0)$.

A simpler and more efficient algorithm for constructing a matching that satisfied  conditions~(i)-(iv) was given in \cite{Kav12-journal}.
This algorithm worked with a graph $\tilde{G} = (\tilde{A}\cup B, \tilde{E})$
which is quite similar to the graph $G'$ used in Section~\ref{sec:dom-mat}, except that there were no dummy women in~$\tilde{G}$. 
The set $\tilde{A}$ had two copies $a_0$ and $a_1$ of each $a \in A$ and at most one of $a_0,a_1$ was {\em active} at any point in time. 
Every $b \in B$ preferred subscript~1 neighbors to subscript~0 neighbors (within subscript~$i$ neighbors, it was $b$'s original order of preference).

The algorithm here was ``active men propose and women dispose''. To begin with, only the men in $\{a_0: a \in A\}$ were active but
when a man $a_0$ got rejected by all his neighbors, he became inactive and his counterpart $a_1$ became active.  It was shown that the 
matching returned satisfied conditions~(i)-(iv). This was a linear time algorithm for computing a maximum size popular matching in $G = (A \cup B,E)$. 
\end{document}